\newcommand{\figref}[1]{Figure~\ref{#1}}
\newcommand{\bmat}[1]{\begin{bmatrix}#1\end{bmatrix}}
\newcommand{\bmtx}{\begin{bmatrix}}
	\newcommand{\emtx}{\end{bmatrix}}
\newcommand{\bsmtx}{\left[ \begin{smallmatrix}} 
	\newcommand{\esmtx}{\end{smallmatrix} \right]}
\newcommand{\field}[1]{\mathbb{#1}}
\newcommand{\R}{\field{R}}
\newcommand{\Sm}{\field{S}}
\newcommand{\Ltwo}{\mathcal{L}_2[0,T]}
\newcommand{\RH}{\field{RH}_\infty}
\newcommand{\RL}{\field{RL}_\infty}
\newcommand{\Rmnum}[1]{\expandafter\@slowromancap\romannumeral #1@}
\newcommand{\commentr}[1]{\textcolor{red}{#1}}
\newcommand{\commentb}[1]{\textcolor{blue}{#1}}
\definecolor{mycolor1}{HTML}{367D7D}
\definecolor{mycolor2}{HTML}{D33502}
\definecolor{mycolor3}{HTML}{FAA818}
\definecolor{mycolor4}{HTML}{41A30D}
\definecolor{mycolor5}{HTML}{FFCE38}
\definecolor{mycolor6}{HTML}{6EBCBC}
\definecolor{mycolor7}{HTML}{37526D}
\colorlet{plant}{mycolor5!40!white}
\colorlet{unc}{mycolor6!40!white}
\colorlet{closedloop}{gray!20!white}
\colorlet{controller}{mycolor4!30!white}
\colorlet{dist}{mycolor3!60!white}
\colorlet{marker}{mycolor7}
\colorlet{algo}{mycolor3!60!white}
\colorlet{miscplot}{mycolor2}
\colorlet{clplot}{mycolor2}
\colorlet{filter}{magenta!20!white}
\newtheorem{theorem}{Theorem}
\newtheorem{definition}{Definition}
\newtheorem{ex}{Example}
\newtheorem{lem}{Lemma}
\newtheorem{asm}{Assumption}
\newtheorem{eremark}{Remark}
\tikzset{>=stealth'}
\title{Finite Horizon Robust Synthesis\\Using Integral Quadratic Constraints}
\author{
  Jyot Buch\\
  Department of Aerospace Engineering and Mechanics\\
  University of Minnesota, Minneapolis\\
  Minneapolis, MN 55455 \\
  \texttt{buch0271@umn.edu} \\
   \And
 Peter Seiler\\
  Department of Electrical Engineering\\
  University of Michigan, Ann Arbor\\
  Ann Arbor, MI 48109 \\
  \texttt{pseiler@umich.edu} \\
}
\begin{document}
\maketitle

\begin{abstract}
	We present a robust synthesis algorithm for uncertain
	linear time-varying (LTV) systems on finite horizons. The uncertain
	system is described as an interconnection of a known LTV system and
	a perturbation. The input-output behavior of the perturbation is
	specified by time-domain Integral Quadratic Constraints (IQCs). The
	objective is to synthesize a controller to minimize the worst-case
	performance. This leads to a non-convex optimization. The proposed
	approach alternates between an LTV synthesis step and an IQC
	analysis step. Both induced $\mathcal{L}_2$ and terminal Euclidean 
	norm penalties on output are considered for finite horizon performance.
	The proposed algorithm ensures that the robust performance is non-increasing at each iteration step. The effectiveness of this method is demonstrated using numerical examples.
\end{abstract}

\keywords{robust control, linear time-varying systems, integral quadratic constraints}

\section{Introduction}

This paper considers robust synthesis for uncertain linear
time-varying (LTV) systems on finite horizons. This problem is motivated by engineering systems that follow a finite-time trajectory and for which model uncertainty is a significant factor.  Examples of such systems include: aircraft landings~\cite{subrahmanyam2012finite}, missile interceptors~\cite{tucker1998continuous} and space-launch or reentry systems~\cite{marcos09,biertumpfel2019finite,man09}. The Jacobian linearization of the nonlinear dynamics along the trajectory yields an uncertain, finite horizon LTV system.  Robust synthesis can be used to ensure the stability and robustness of the linearized closed-loop over a range of parametric and dynamic uncertainties. Many existing robust synthesis algorithms, e.g. $\mu$-synthesis
\cite{doyle1987design,balas1992design,packard1993linear,veenman2014iqc} have been developed for
uncertain linear time-invariant (LTI) system and infinite horizon
robustness metrics. This enables the use of frequency-domain
techniques. In contrast, this paper is developed for uncertain
finite horizon, LTV systems using time-domain techniques.

The specific formulation uses an uncertain system
described by an interconnection of a known LTV system and a
perturbation. The input-output behavior of the perturbation is
described by time-domain Integral Quadratic Constraints (IQCs). 
The performance objective is specified by an induced gain from
$\mathcal{L}_2$ input disturbances to a mixture of an $\mathcal{L}_2$
and terminal Euclidean norm on the output. The objective is to
synthesize a controller to minimize the worst-case performance over
all allowable uncertainties. This worst-case performance can be used
to robustly bound the state at the end of a finite horizon in the
presence of external disturbances and model uncertainty.

This robust synthesis problem leads, in general, to a non-convex
optimization. The proposed algorithm, presented in
Section~\ref{sec:RobustSynthesis}, iterates between a nominal
synthesis step and robustness analysis step. The nominal synthesis
step relies on existing finite horizon $H_\infty$
synthesis results which consider a control theoretic formulation~\cite{subrahmanyam2012finite,petersen00,green2012linear,tadmor1990worst}. An alternative game theoretic formulation is considered in~\cite{limebeer1992game} which provides equivalent synthesis conditions. These conditions can be stated in terms of two coupled Riccati Differential Equations (RDEs)~\cite{khargonekar1991h_,ravi1991h,uchida1992finite} or two coupled Riccati Differential Inequalities (RDIs)~\cite{lall1995riccati}. We use the two coupled RDEs as it provides numerical advantage over the RDI conditions. Moreover, in contrast to other work, the results in~\cite{khargonekar1991h_,ravi1991h,uchida1992finite} allow for terminal Euclidean norm penalties on the output. The robustness analysis step uses the IQC framework introduced in~\cite{yakubovich1967frequency, megretski1997system}. This framework has been extended in~\cite{seiler2019finite} to assess robustness of the uncertain LTV systems on finite horizons. The approach presented in~\cite{seiler2019finite} will be used in this paper for the robustness analysis. Finally, a scaled plant construction is required to link the nominal synthesis and robustness analysis steps.

The proposed method is analogous to the existing \mbox{DK} iteration method
for uncertain LTI systems on infinite horizons.  The algorithm in
this paper generalizes this method to uncertain LTV systems on finite
horizons. Similar extensions have been made in~\cite{wang2016robust,veenman2010robust}
for Linear Parameter-Varying (LPV) systems. Two other closely related works are~\cite{o1999robust} and \cite{pirie2002robust}. The work in~\cite{o1999robust} considers an extension of the Glover-McFarlane
loop-shaping method to LTV systems on infinite horizons. This leads to a robust
stabilization problem with a single full block uncertainty. The work in~\cite{pirie2002robust} provides convex synthesis conditions for robust
performance of uncertain LTV systems. However, \cite{pirie2002robust} assumes
that uncertainty lies in a contractive subset and is block partitioned with $(2,2)$ block being zero. This special structure is used to convexify the synthesis optimization. The algorithm proposed in this paper considers more general robust performance formulation than in~\cite{o1999robust} and~\cite{pirie2002robust}, which allows us to design output-feedback controllers that robustly bound the reachable set of a finite horizon LTV system. A MATLAB implementation of the proposed algorithm including the numerical examples are available in the LTVTools~\cite{LTVTools} toolbox.

There are three main contributions of the paper. We propose a new iterative algorithm to synthesize robust output feedback controllers of uncertain LTV systems on finite time horizons. This is a continuation of our preliminary work in~\cite{buch2020finte}. The distinctions from~\cite{buch2020finte} are as follows: First, we use the dynamic IQC multipliers for the proposed algorithm, whereas the prior work in~\cite{buch2020finte} used the memoryless IQCs and related classes of uncertainties. Second, we use a time-varying IQC factorization to construct a scaled plant. This step ensures that the worst-case gain at each iteration is monotonically non-increasing. Finally, this paper provides all details and technical proofs regarding the proposed approach. The effectiveness is demonstrated using a nonlinear robot arm example.

\vspace{0.1in}
\noindent\textbf{Notation:} Let $\mathbb{R}^{n \times m}$ and
$\Sm^{n}$ denote the sets of $n$-by-$m$ real matrices and $n$-by-$n$
real, symmetric matrices. The finite horizon $\Ltwo$ norm of a (Lebesgue integrable) signal $v:[0,T] \rightarrow \R^n$ is
$\|v\|_{2, [0,T]} := \left( \int_0^T v(t)^\top v(t) dt \right)^{1/2}$. If
$\|v\|_{2, [0,T]} < \infty$ then $v \in \mathcal{L}_2^n[0,T]$. $\RL$ is the set
of rational functions with real coefficients that are proper and have no poles on the imaginary axis.  $\RH \subset \RL$ contains functions that are
analytic in the closed right-half of the complex plane. 
An abstract formulation using standard Linear Fractional Transformation (LFT) framework~\cite{zhou1996robust,dullerud2013course} is used throughout the paper. The notations $\mathcal{F}_{l}(G,K)$ and $\mathcal{F}_{u}(N,\Delta)$ represents lower and upper LFTs respectively. Finally, $G^\sim$ denotes the adjoint of a dynamical system $G$ as formally defined in Section 3.2.4 of \cite{green2012linear}.

\section{Preliminaries}
\label{sec:prelim}

\subsection{Nominal Performance}
\label{sec:nomperf}
Consider an LTV system $H$ defined on the horizon $[0,T]$:
\begin{align}
	\label{eq:nLTV}
	\dot{x}(t) &= A(t)\, x(t) + B(t)\, d(t)\\
	e(t) & = C(t) \, x(t) + D(t) \, d(t)
\end{align}
where $x(t) \in \R^{n_x}$ is the state, $d(t) \in \R^{n_d}$ is the disturbance input, and
$e(t) \in \R^{n_e}$ is the performance output at time $t\in[0,T]$.  The state matrices $A : [0,T] \rightarrow \R^{n_x \times n_x}$.
$B : [0,T] \rightarrow \R^{n_x \times n_d}$,
$C : [0,T] \rightarrow \R^{n_e \times n_x}$, and
$D : [0,T] \rightarrow \R^{n_e \times n_d}$ are piecewise-continuous
(bounded) real matrix valued functions of time.
It is assumed throughout
that $T < \infty$. Thus $d \in \mathcal{L}_2[0,T]$ implies $x$ and $e$
are in $\mathcal{L}_2[0,T]$ for any initial condition $x(0)$
(Chapter 3 of \cite{green2012linear}). To simplify further, zero initial conditions are assumed for states i.e.  $x(0) = 0$. Explicit time dependence of the state
matrices is omitted when it is clear from the context. The performance of $H$ will be assessed in terms of an induced gain
with two components. First partition the output as follows:
\begin{align}
	\label{eq:LTVoutput}
	\bmtx e_I(t) \\ e_E(t) \emtx =
	\bmtx C_I(t) \\ C_E(t) \emtx \, x(t) 
	+ \bmtx D_I(t) \\ 0 \emtx \, d(t) 
\end{align}
where $e_I(t) \in \R^{n_I}$ and $e_E(t) \in \R^{n_E}$ with $n_e = n_E+n_I$.
The generalized performance metric of $H$ is then defined as,
\begin{align}
	\|H\|_{[0,T]}:=
	\sup_{\stackrel{0\ne d \in \mathcal{L}_2[0,T]}{x(0)=0}}
	\left[
	\frac{{\|e_E(T)\|_2^2 + \|e_I\|^2_{2,[0,T]}}} {\|d\|^2_{2,[0,T]}} 
	\right]^{1/2}
	\label{eq:norm}
\end{align}
This defines an induced gain from the input $d$ to a mixture of an
$\mathcal{L}_2$ and terminal Euclidean norm on the output $e$. This is a useful generalization, as many control design requirements often involve bounding the outputs at final time in addition to bounded control effort. The example discussed in Section~\ref{sec:NLEx} uses such mixed penalties. More general quadratic cost as in~\cite{seiler2019finite} can also be considered with appropriate choice of the input-output matrices (see Appendix~\ref{sec:quadcost}). Note that if $n_E=0$ then there is no terminal Euclidean norm penalty on the output. This case corresponds to the standard, finite horizon induced $\mathcal{L}_2$ gain of $H$. Similarly, if $n_I=0$ then there
is no $\mathcal{L}_2$ penalty on the output. This case corresponds to
a finite horizon $\mathcal{L}_2$-to-Euclidean gain. This can be used
to bound the terminal output $e_E(T)$ resulting from an
$\mathcal{L}_2$ disturbance input. Zero feed-through from $d$ to $e_E$
ensures that the Euclidean penalty is well-defined at time $t=T$. The next theorem states an equivalence between a bound on this
performance metric $\|H\|_{[0,T]}$ and the existence of a solution to
a related RDE (Theorem 3.7.4 of \cite{green2012linear}).
\begin{theorem}
	\label{thm:nominalperf}
	Consider an LTV system \eqref{eq:nLTV} with $\gamma>0$ given. Let
	$Q : [0,T] \rightarrow \Sm^{n_x}$,
	$S : [0,T] \rightarrow \R^{n_x \times n_d}$,
	$R : [0,T] \rightarrow \Sm^{n_d}$, and $F \in \R^{n_x\times n_x}$ be
	defined as follows\footnote{If $n_I=0$ then $Q=0_{n_x}$, $S=0_{n_x \times n_d}$, and $R=-\gamma^2I_{n_d}$. Similarly, if $n_E=0$ then  $F=0_{n_x}$.}.
	\begin{align*}
		\begin{split}
			Q := C_I^\top C_I, \hspace{0.1in}  
			S := C_I^\top D_I, \hspace{0.1in}
			R := D_I^\top D_I-\gamma^2 I_{n_d}, \hspace{0.1in}
			F := C_{E}(T)^\top C_{E}(T)
		\end{split}
	\end{align*}
	The following statements are equivalent:
	\begin{enumerate}
		\item $\|H\|_{[0,T]} < \gamma$
		\item $R(t) \prec 0$ for all $t \in [0,T]$. Moreover, there
		exists a differentiable function $P:[0,T] \rightarrow \Sm^{n_x}$ such
		that $P(T)=F$ and
		\begin{align*}
			\dot{P} +  A^\top P + PA + Q- (PB+S) R^{-1} (PB+S)^\top = 0
		\end{align*}
		This is a Riccati Differential Equation (RDE).
	\end{enumerate}  
\end{theorem}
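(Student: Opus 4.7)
The proof is a finite-horizon bounded-real-lemma argument organized around the candidate storage function $V(x,t) = x(t)^\top P(t) x(t)$ with terminal condition $P(T)=F = C_E(T)^\top C_E(T)$. The central identity follows by integrating $\frac{d}{dt}(x^\top P x)$ from $0$ to $T$ using the dynamics and the zero initial state: since $x(0)=0$, we get $\|e_E(T)\|_2^2 = x(T)^\top F\, x(T) = \int_0^T \frac{d}{dt}(x^\top P x)\,dt$. Adding and subtracting $\int_0^T (e_I^\top e_I - \gamma^2 d^\top d)\,dt$ and expanding using $\dot x = Ax+Bd$ and $e_I = C_I x + D_I d$ yields
\begin{align*}
\|e_E(T)\|_2^2 + \|e_I\|_{2,[0,T]}^2 - \gamma^2\|d\|_{2,[0,T]}^2
= \int_0^T \!\! \Bigl[\, x^\top\bigl(\dot P + A^\top P + PA + Q\bigr)x + 2x^\top(PB+S)d + d^\top R\, d\, \Bigr] dt.
\end{align*}

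For (2)$\Rightarrow$(1), the plan is to complete the square in $d$. With $d^\star(x) := -R^{-1}(PB+S)^\top x$ (well-defined since $R\prec 0$), the integrand equals $x^\top[\dot P + A^\top P + PA + Q - (PB+S)R^{-1}(PB+S)^\top]\,x + (d-d^\star)^\top R\,(d-d^\star)$, and the bracketed expression vanishes by the RDE, leaving a nonpositive quantity $\int_0^T (d-d^\star)^\top R (d-d^\star)\,dt$. Compactness of $[0,T]$ and continuity of the coefficients give a uniform bound $R(t)\preceq -\mu I$ for some $\mu>0$. To promote the nonstrict bound to a strict one, I would note that the mapping $d\mapsto d-d^\star(x(d))$ is of the form $I-L$ where $L$ is a Volterra integral operator on $\mathcal{L}_2[0,T]$, hence compact; moreover $L d = d$ together with $x(0)=0$ forces $x\equiv 0$ and thus $d\equiv 0$, so $I-L$ is injective, hence boundedly invertible. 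This yields $\|d-d^\star\|_{2,[0,T]} \ge c\|d\|_{2,[0,T]}$ for a uniform $c>0$, giving $\|H\|_{[0,T]}^2 \le \gamma^2 - \mu c^2 < \gamma^2$.

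For (1)$\Rightarrow$(2), I would first establish $R\prec 0$ as a pointwise necessary condition: applying disturbances of narrow support concentrated near an arbitrary $t_0\in[0,T)$ makes the state contribution to $e_I$ negligible compared with the feedthrough $D_I d$, so $\|H\|_{[0,T]}<\gamma$ forces $D_I(t_0)^\top D_I(t_0)\prec \gamma^2 I$. For existence of $P$ solving the RDE on all of $[0,T]$, the plan is a value-function / differential game argument: define
\begin{align*}
V(x_0,t_0) := \sup_{d\in\mathcal{L}_2[t_0,T]} \Bigl[\,\|e_E(T)\|_2^2 + \|e_I\|_{2,[t_0,T]}^2 - \gamma^2 \|d\|_{2,[t_0,T]}^2\,\Bigr], \quad x(t_0)=x_0.
\end{align*}
Under (1), the supremum is finite and (by linearity of the dynamics and quadraticity of the cost) quadratic in $x_0$, so $V(x_0,t_0) = x_0^\top P(t_0)\,x_0$ for a symmetric, continuously differentiable $P$ with $P(T)=F$. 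The dynamic programming / HJB principle applied to this quadratic value function reproduces exactly the stated RDE, and finiteness of $V$ on all of $[0,T]$ rules out finite escape. This is the argument of Theorem~3.7.4 in \cite{green2012linear}, extended in a routine way to accommodate the terminal Euclidean penalty via the terminal condition $P(T)=F$.

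The main obstacle is the (1)$\Rightarrow$(2) direction, specifically ruling out finite escape of the RDE on $[0,T]$: a direct ODE-theoretic attack is awkward because of the nonlinear $(PB+S)R^{-1}(PB+S)^\top$ term, and the clean route really is the value-function argument, which requires carefully verifying that the worst-case cost is finite and quadratic whenever (1) holds. A secondary subtlety, easy to overlook, is the strictness of the induced-gain bound in the (2)$\Rightarrow$(1) direction; the pointwise inequality $(d-d^\star)^\top R(d-d^\star)\le 0$ alone gives only $\|H\|_{[0,T]}\le\gamma$, and the compact-operator / Volterra argument is needed to convert this to the strict inequality claimed.
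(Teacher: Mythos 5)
The paper never proves this theorem itself: it is imported verbatim as Theorem 3.7.4 of \cite{green2012linear}, so the only meaningful comparison is with that standard finite-horizon bounded-real-lemma argument, which is exactly the route you take. Your (2)$\Rightarrow$(1) direction is correct and essentially complete: the integrated identity and the completion of squares are right, and your strictness fix works, although it can be simplified — the inverse of $d\mapsto d-d^\star$ is exhibited explicitly by the closed-loop LTV system $\dot{x}=\bigl(A-BR^{-1}(PB+S)^\top\bigr)x+Br$, $d=r+d^\star(x)$, which is bounded on $\mathcal{L}_2[0,T]$, so no Fredholm/compactness argument is needed (alternatively, a Volterra operator has spectral radius zero, so $I-L$ is invertible outright). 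In the (1)$\Rightarrow$(2) direction you have identified the right strategy (pointwise necessity of $R\prec 0$ via concentrated disturbances, then a value-function/dynamic-programming argument ruling out finite escape), but the two substantive steps are asserted rather than carried out, as you yourself flag: (i) that $\|H\|_{[0,T]}<\gamma$ for zero initial conditions implies the worst-case cost from a nonzero state $x_0$ at time $t_0$ is finite and quadratic — this requires the coercivity $\gamma^2 I-H_{t_0}^\ast H_{t_0}\succeq \epsilon I$ on $[t_0,T]$ together with a Cauchy--Schwarz bound on the cross term between the initial-condition response and the disturbance response; and (ii) the link between the value function and the RDE solution on its maximal interval of backward existence, so that uniform boundedness of the value function (above by (i), below by taking $d\equiv 0$) precludes escape and extends $P$ to all of $[0,T]$. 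These are precisely the details the cited textbook proof supplies; your proposal matches it in substance but leaves that hard half at the level of a plan.
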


The nominal performance $\|H\|_{[0,T]} < \gamma$ is achieved if the
associated RDE solution exists on $[0,T]$ when integrated backward
from $P(T)=F$. The assumption $R(t)\prec 0$ ensures $R(t)$ is
invertible and hence the RDE is well-defined $\forall t\in [0,T]$.
Thus, the solution of the RDE exists on $[0,T]$ unless it grows
unbounded. The smallest bound on $\gamma$ is obtained using
bisection.

\subsection{Nominal Synthesis}
\label{sec:nominalsyn}

This subsection provides conditions to synthesize a controller that is
optimal with respect to the nominal performance metric introduced in
the previous subsection.  Consider the feedback interconnection shown
in \figref{fig:NominalOpFb}.
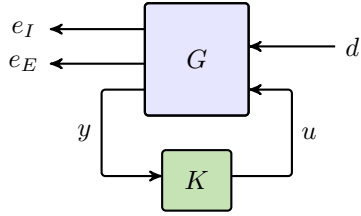
\begin{figure}[h]
	\centering
	\begin{tikzpicture}[thick,rounded corners = 0.5mm,scale=1.15]
		\draw [->](0,0.9) -- (-1.1,0.9);
		\draw [->](0,0.5) -- (-1.1,0.5);
		\draw [->](2.2,0.7) -- (1.2,0.7);
		\node at (2.4,0.7) {$d$};
		\node at (-1.4,0.9) {$e_I$};
		\node at (-1.4,0.5) {$e_{E}$};
		\draw [fill=controller]  (0.2,-0.5) rectangle node{$K$}(1,-1.2);
		\draw [->](0,0.2) -- (-0.5,0.2) -- (-0.5,-0.8) -- (0.2,-0.8);
		\draw [->](1,-0.8) -- (1.7,-0.8) -- (1.7,0.2) -- (1.2,0.2);
		\node at (-0.7,-0.3) {$y$};
		\node at (1.9,-0.3) {$u$};
		\draw [fill=blue,fill opacity=0.1] (0,-0.1) rectangle node{$G$}(1.2,1.2);
		\draw (0,-0.1) rectangle node{$G$}(1.2,1.2);
	\end{tikzpicture}	
	\caption{Nominal Feedback Interconnection $\mathcal{F}_l(G,K)$}
	\label{fig:NominalOpFb}
\end{figure}

\noindent The LTV system $G$ defined on $[0,T]$ is given by:
\begin{align}
	\label{eq:nomsyn}
	\bmtx \dot{x}(t)\\\hline e_{I}(t) \\ e_E(t) \\y(t)\emtx =
	\left[
	\begin{array}{c|c c}
		A(t) & B_d(t) & B_u(t)\\
		\hline
		C_{I}(t) & 0 & D_{Iu}(t) \\
		C_E(t) & 0 & 0\\
		C_y(t) & D_{yd}(t) & 0
	\end{array}
	\right]
	\bmtx x(t) \\\hline d(t)\\u(t)\emtx
\end{align}
where $d(t) \in \R^{n_d}$ is the generalized disturbance,
$u(t) \in \R^{n_u}$ is the control input and $y(t) \in \R^{n_y}$ is the
measured output. The generalized disturbance is of the form $d(t) = \bsmtx d_{in}(t)\\ n(t)\esmtx$, where $n(t) \in \R^{n_y}$ is a measurement noise and $d_{in}(t)$ represents all other disturbance inputs. This plant structure also assumes no feedthrough from $d$ to $e_E$. 
This is required to ensure that the nominal performance metric is well-posed.
In addition, the standard $H_\infty$ synthesis framework imposes 
additional structure on the matrices relating $d$ to $e_I$ and $d$ to $y$. This is done to 
simplify notation and is obtained via standard loop
transformations under some minor technical assumptions
(Chapter 17 of \cite{zhou1996robust}). This leads to the following additional structure 
on the plant matrices:
\begin{align*}
	C_I:= \bmtx 0 \\ C_1\emtx \;\;\;\;
	D_{Iu} := \bmtx I_{n_u}\\ 0\emtx \;\;\;\;\;
	D_{yd} := \bmtx 0 & I_{n_y}\emtx
\end{align*}
The nominal synthesis problem is to find a causal linear time-varying
controller
$K:\mathcal{L}^{n_y}_2[0,T]\rightarrow\mathcal{L}^{n_u}_2[0,T]$ that
optimizes the closed-loop nominal performance, i.e.:
\begin{align*}
	\inf_{K} \, \|\mathcal{F}_l(G,K)\|_{[0,T]}
\end{align*}
As noted previously, if $n_E=0$ then the nominal performance metric is
the (finite horizon) induced $\mathcal{L}_2$ gain.  In this case, the
synthesis problem is equivalent to the existing finite horizon
$H_\infty$ problem as considered in
\cite{green2012linear,tadmor1990worst}. The theorem below states the
necessary and sufficient conditions for existence of a
$\gamma$-suboptimal controller for the nominal performance metric
(with $n_E$ not necessarily equal to zero). Theorem \ref{thm:nomsyn} is a
special case of results presented in
\cite{khargonekar1991h_,uchida1992finite}.

\begin{theorem} 
	\label{thm:nomsyn}	 
	Consider an LTV system \eqref{eq:nomsyn} with $\gamma > 0$ given. Let $B$, $\hat{C}$, $\bar{R}$ and $\hat{R}$ be defined as follows.
	\begin{align*}
		\begin{split}
			B:= \bmtx B_d & B_u\emtx, \hspace{0.1in}   \bar{R}:=\text{diag}\{-\gamma^{2}I_{n_d}, I_{n_u}\}, \hspace{0.1in}  
			\hat{C}:= \bmtx C_I^\top & C_y^\top\emtx^\top, \hspace{0.1in}   \hat{R}:=\text{diag}\{-\gamma^{2}I_{n_I}, I_{n_y}\}
		\end{split}
	\end{align*}
	\begin{enumerate}		
		\item There exists an admissible output feedback controller $K$ such that $\|\mathcal{F}_l(G,K)\|_{[0,T]}<\gamma$ if and only if the following three conditions hold:
		\begin{enumerate}			
			\item There exists a differentiable function
			$X:[0,T] \rightarrow \Sm^{n_x}$ such that $X(T) = C_E(T)^\top C_E(T)$,		
			\begin{align*}
				\dot{X} + A^{\top} X + XA - XB\bar{R}^{-1}B^\top X + C_{I}^\top C_{I}= 0 
			\end{align*}
			\item There exists a differentiable function
			$Y:[0,T] \rightarrow \Sm^{n_x}$ such that $Y(0) = 0$,		
			\begin{align*}
				-\dot{Y} + AY + YA^{\top} - Y\hat{C}^\top\hat{R}^{-1}\hat{C}Y + B_{d}B_{d}^\top= 0
			\end{align*}
			\item $X(t)$ and $Y(t)$ satisfy the following point-wise in time spectral radius condition,
			\begin{align}
				\label{ccond}
				\rho(X(t)Y(t))<\gamma^2, \hspace{0.1in} \forall t \in [0,T]
			\end{align}
		\end{enumerate}
		\item If the conditions above are satisfied, then the closed loop performance
		$\|\mathcal{F}_l(G,K)\|_{[0,T]}<\gamma$ is achieved by the
		following central controller:
		\begin{align*}
			\dot{\hat{x}}(t) &= A_K(t)\, \hat{x}(t) + B_K(t)\, y(t)\\
			u(t) &= C_K(t)\, \hat{x}(t)
		\end{align*}
		where
		\begin{align*}
			Z &  := (I - \gamma^{-2}YX)^{-1} \\
			A_K& := A + \gamma^{-2}B_dB_d^\top X - ZYC_y^\top C_y - B_uB_u^\top X\\
			B_K& := ZYC_y^\top\\
			C_K& :=-B_u^\top X 
		\end{align*}
	\end{enumerate}
\end{theorem}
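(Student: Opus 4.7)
The plan is to follow the game-theoretic derivation of finite-horizon $H_\infty$ output feedback (as in the cited works \cite{khargonekar1991h_, uchida1992finite}), adapted to accommodate the terminal Euclidean penalty through the terminal condition $X(T)=C_E(T)^\top C_E(T)$ on the control Riccati equation. The two directions (necessity and sufficiency) are handled separately, with the central controller doing double duty as the witness for sufficiency.

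For \textbf{sufficiency}, I would first consider the auxiliary full-information problem in which the controller sees both $x$ and $d$. The $X$-RDE is the Hamilton--Jacobi--Isaacs equation of the associated differential game with cost $\|e_E(T)\|_2^2 + \|e_I\|_{2,[0,T]}^2 - \gamma^2 \|d\|_{2,[0,T]}^2$. The structural normalizations on $C_I$ and $D_{Iu}$ (namely $D_{Iu}^\top C_I=0$ and $D_{Iu}^\top D_{Iu}=I$) cause the cross terms in $B\bar R^{-1}B^\top X$ to decouple cleanly, while the terminal cost $\|e_E(T)\|_2^2$ produces precisely the terminal condition $X(T)=C_E(T)^\top C_E(T)$ via a standard completion-of-squares argument applied to $\frac{d}{dt}(x^\top X x)$. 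This yields a state-feedback gain $-B_u^\top X$ and a worst-case disturbance $\gamma^{-2}B_d^\top X x$. Next, I would introduce the worst-case observer driven by $y$, whose error Riccati equation is the $Y$-RDE with initial condition $Y(0)=0$ inherited from $x(0)=0$. This can be derived either by duality against the adjoint system $G^\sim$ or directly by propagating the conditional worst-case state estimate. The separation structure is then assembled via $Z=(I-\gamma^{-2}YX)^{-1}$, whose invertibility on $[0,T]$ is guaranteed exactly by the spectral-radius coupling $\rho(XY)<\gamma^2$. To verify that the proposed central controller achieves $\|\mathcal{F}_l(G,K)\|_{[0,T]}<\gamma$, I would apply Theorem~\ref{thm:nominalperf} to the closed loop by exhibiting a closed-loop storage function of the form
\begin{align*}
P_{cl}(t) = \begin{bmatrix} X & -X \\ -X & X+Z^{-\top}Y^{-1}\, \text{(appropriately)} \end{bmatrix}
\end{align*}
(the exact block structure is recovered by a change of coordinates $(x,\hat x)\mapsto(x,x-\hat x)$), and then checking that the closed-loop RDE reduces to the two given RDEs plus a non-negative remainder.

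For \textbf{necessity}, I would assume a stabilizing $K$ achieves $\|\mathcal{F}_l(G,K)\|_{[0,T]}<\gamma$ and extract the two RDEs and the coupling. The standard technique is: apply Theorem~\ref{thm:nominalperf} to the closed loop to obtain a closed-loop $P_{cl}$, then project $P_{cl}$ onto the plant state coordinate to recover $X$ (this uses a Schur complement argument that also shows $X$ exists on $[0,T]$ with the correct terminal value). The $Y$-RDE is obtained by the dual argument on the adjoint closed loop, using that the adjoint of an admissible interconnection inherits the same gain bound. The coupling $\rho(XY)<\gamma^2$ drops out as the requirement that the assembled closed-loop $P_{cl}$ is well-defined and has the correct inertia, i.e.\ that $I-\gamma^{-2}YX$ is nonsingular point-wise in time.

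The \textbf{main obstacle} is bookkeeping around the non-zero terminal condition on $X$: classical $H_\infty$ output-feedback proofs set $X(T)=0$ and $Y(0)=0$, and several of the usual identities (for instance the one certifying that $P_{cl}$ is a valid storage function at $t=T$) need to be re-derived so that the boundary term $C_E(T)^\top C_E(T)$ is absorbed correctly. The zero feedthrough from $d$ to $e_E$ is essential here: without it the terminal Euclidean cost would not even be well-defined under $\mathcal{L}_2$ disturbances, and the Schur complement used to recover $X$ from $P_{cl}$ at $t=T$ would fail. A secondary subtlety is the spectral-radius condition: one must show it holds on the \emph{entire} interval $[0,T]$, not just at isolated times, which follows from continuity of $X(t)Y(t)$ together with the fact that blow-up of $Z$ would contradict the assumed gain bound on $\mathcal{F}_l(G,K)$.
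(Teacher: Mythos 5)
The paper does not actually prove this theorem: it states that Theorem~\ref{thm:nomsyn} is a special case of the results in \cite{khargonekar1991h_,uchida1992finite}, and those results are established by exactly the game-theoretic, two-coupled-RDE argument you outline (full-information completion of squares giving the $X$-RDE with terminal condition $C_E(T)^\top C_E(T)$, a worst-case estimation/duality argument giving the $Y$-RDE with $Y(0)=0$, and the spectral-radius coupling ensuring the separation via $Z=(I-\gamma^{-2}YX)^{-1}$). Your sketch is therefore the same approach the paper relies on by citation; the only soft spots are the guessed block form of $P_{cl}$ (not quite consistent as written) and the implicit use of $Y^{-1}$ despite $Y(0)=0$, both of which you correctly defer to the change of coordinates $(x,x-\hat x)$ and which are handled rigorously in the cited references.
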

For a given $\gamma>0$, the RDEs associated with $X$ and $Y$
are integrated backward and forward in time, respectively.  If
solution to both RDEs exist then the spectral radius coupling
condition \eqref{ccond} is checked. If all three conditions are
satisfied then the central controller achieves a closed-loop
performance of $\gamma$.  The smallest possible value of $\gamma$ is
obtained using bisection. The results in
\cite{khargonekar1991h_,uchida1992finite} also consider the effect of
uncertain initial conditions.

\section{Robust Performance}
\label{sec:robperf}

\subsection{Uncertain LTV Systems}
An uncertain, time-varying system $\mathcal{F}_u(N,\Delta)$ is shown
in \figref{fig:UncertainOL}. This consists of an interconnection of a
known finite horizon LTV system $N$ and a perturbation $\Delta$. This
perturbation represents block-structured uncertainties and/or
nonlinearities. 
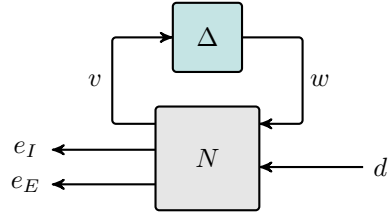
\begin{figure}[h]
	\centering
	\begin{tikzpicture}[thick,scale=1.15,rounded corners = 0.5mm]
		\draw [fill=closedloop] (0,0) rectangle node{$N$}(1.2,1.2);
		\draw [fill=unc] (0.2,1.6) rectangle node{$\Delta$}(1,2.4);
		\draw [->](0,1) -- (-0.5,1) -- (-0.5,2) -- (0.2,2);
		\draw [->](1,2) -- (1.7,2) -- (1.7,1) -- (1.2,1);
		\draw [->](0,0.7) -- (-1.2,0.7);
		\draw [->](0,0.3) -- (-1.2,0.3);
		\draw [->](2.4,0.5) -- (1.2,0.5);
		\node at (2.6,0.5) {$d$};
		\node at (1.9,1.5) {$w$};
		\node at (-0.7,1.5) {$v$};
		\node at (-1.5,0.7) {$e_I$};
		\node at (-1.5,0.3) {$e_E$};
	\end{tikzpicture}	
	\caption{Uncertain System Interconnection $\mathcal{F}_u(N,\Delta)$}
	\label{fig:UncertainOL}
\end{figure}
The term ``uncertainty'' is used for simplicity when
referring to $\Delta$. It is assumed throughout that the interconnection
$\mathcal{F}_u(N,\Delta)$ is well-posed. A formal definition for
well-posedness is given in
\cite{zhou1996robust,megretski1997system}. The LTV system $N$ is
described by the following state-space model:
\begin{align}
	\label{eq:uLTV}
	\bmtx \dot{x}_N(t)\\\hline v(t)\\ e_{I}(t) \\ e_E(t)\emtx =
	\left[
	\begin{array}{c|c c}
		A_N(t) & B_w(t) & B_d(t)\\
		\hline
		C_v(t) & D_{vw}(t) & D_{vd}(t) \\
		C_{I}(t) & D_{Iw}(t) & D_{Id}(t)\\
		C_E(t) & 0 & 0
	\end{array}
	\right]
	\bmtx x_N(t) \\\hline w(t)\\ d(t)\emtx
\end{align}
In addition to notations defined earlier $v \in \R^{n_v}$ and
$w \in \R^{n_w}$ are signals associated with the uncertainty $\Delta$. The state vector is denoted as $x_N \in \R^{n_N}$ to refer to the states of system $N$.  

\subsection{Worst-Case Gain}
The robust performance of the uncertain system $\mathcal{F}_u(N,\Delta)$ is assessed using the worst-case gain as defined below.
\begin{definition}
	\label{def:wcgain}
	Let an LTV system N be given by \eqref{eq:uLTV} and uncertainty	$\Delta:\mathcal{L}^{n_v}_2[0,T]\rightarrow\mathcal{L}^{n_w}_2[0,T]$
	be in some set $\mathcal{S}$. Assume the interconnection $\mathcal{F}_u(N,\Delta)$ is well-posed. The worst-case gain is then defined as:
	\begin{align*}
		\gamma_{wc} := \sup_{\Delta \in \mathcal{S}} \|\mathcal{F}_u(N,\Delta)\|_{[0,T]}
	\end{align*}
\end{definition}
The worst-case gain is the largest induced gain of the uncertain time-varying system over all uncertainties $\Delta$ in set $\mathcal{S}$. This is difficult to compute directly as it involves an optimization over the entire uncertainty set. Instead, we focus on computing an upper bound on the worst-case gain using dissipation inequalities and IQC conditions.

\subsection{Integral Quadratic Constraints (IQCs)}
IQCs \cite{yakubovich1967frequency,megretski1997system} are used to describe the input-output
behavior of $\Delta$. A time-domain formulation is used here for the analysis of the uncertain time-varying system. This formulation is based on the graphical interpretation as shown in \figref{fig:IQCFilter}. Time domain IQCs, as used in this paper, are defined for $\Delta$ by specifying a filter $\Psi$ and a finite horizon constraint on the filter output $z$.
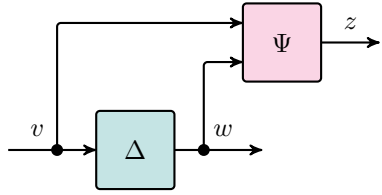
\begin{figure}[h]
	\centering
	\begin{tikzpicture}[thick,scale=1.3,rounded corners = 0.5mm]
		\draw [fill=unc] (0.1,0) rectangle node{$\Delta$}(0.9,0.8);
		\draw [fill=filter] (1.6,1.1) rectangle node{$\Psi$}(2.4,1.9);
		\draw [->](-0.3,0.4) -- (-0.3,1.7) -- (1.6,1.7);
		\draw [->](1.2,0.4) -- (1.2,1.3) -- (1.6,1.3);
		\draw [->](0.9,0.4) -- (1.8,0.4);
		\draw [->](-0.8,0.4) -- (0.1,0.4);
		\draw [->](2.4,1.5) -- (3,1.5);
		\node at (1.4,0.6) {$w$};
		\node at (-0.5,0.6) {$v$};
		\node at (2.7,1.7) {$z$};
		\draw [fill=black] (-0.3,0.4) ellipse (0.05 and 0.05);
		\draw [fill=black] (1.2,0.4) ellipse (0.05 and 0.05);
	\end{tikzpicture}	
	\caption{Graphical Interpretation for Time Domain IQCs}
	\label{fig:IQCFilter}
\end{figure}

\noindent The LTV dynamics of filter $\Psi$ on the horizon $[0,T]$ are given as follows:
\begin{align}
	\label{eq:Psi}
	\bmtx \dot{x}_\psi(t) \\\hline z(t)\emtx =
	\left[
	\begin{array}{c|c c}
		A_\psi(t) \, & B_{\psi v}(t) \, & B_{\psi w}(t)\\
		\hline
		C_\psi(t) \, & D_{\psi v}(t) \, & D_{\psi w}(t)
	\end{array}
	\right]
	\bmtx x_\psi(t) \\\hline v(t) \\ w(t)\emtx
\end{align}
where $x_\psi \in \R^{n_\psi}$ is the state. The formal definition for a time-domain IQC is given next.
\begin{definition}
	\label{def:tdiqc}
	Consider an LTV system $\Psi:\mathcal{L}_{2}^{(n_v+n_w)}[0,T] \rightarrow \mathcal{L}_{2}^{n_z}[0,T]$ and
	$M:[0,T] \rightarrow \Sm^{n_z}$ be given with $M$ piecewise
	continuous. A bounded, causal operator
	$\Delta:\mathcal{L}_{2}^{n_v}[0,T] \rightarrow \mathcal{L}_{2}^{n_w}[0,T]$
	satisfies the \underline{time domain IQC} defined by $(\Psi, M)$ if
	the following inequality holds for all
	$v \in \mathcal{L}_2^{n_v}[0,T]$ and $w=\Delta(v)$:
	\begin{align}
		\label{eq:tdiqc}
		\int_0^T z(t)^\top M(t) z(t) \, dt \, \ge 0
	\end{align}
	where $z$ is the output of $\Psi$ driven by inputs $(v,w)$ with
	zero initial conditions $x_\psi(0)=0$.
\end{definition}
Note that Definition~\ref{def:tdiqc} allows the IQC filter $\Psi$ to be time-varying. This time-varying generalization provides an additional degree of freedom for finite horizon robustness analysis with IQCs. Similar generalizations for LPV systems are presented in~\cite{pfifer2015robustness} to use parameter-varying IQCs. However, exploring this additional degree of freedom is a subject of future research. Thus, the examples discussed later in the paper are for the special case where $\Psi$ is an LTI filter. The notation $\Delta \in \mathcal{I}(\Psi,M)$ is used if $\Delta$ satisfies the IQC defined by $(\Psi,M)$. A valid IQC $\mathcal{I}(\Psi,M)$ can be defined for a set $\mathcal{S}$ such that $\mathcal{S} \subseteq \mathcal{I}(\Psi,M)$. Two examples are provided below.
\begin{ex}
	\label{ex:LTIunc}
	Let $\mathcal{S}$ denote the set of LTI uncertainties $\Delta\in\RH$ with $\| \Delta \|_\infty \le 1$. Let $(\Psi,M)$ be defined as follows:
	\begin{align}
		\label{eq:LTIuncIQC}
		\begin{split}
			\Psi & :=\bmtx \Psi_{11} & 0 \\ 0 & \Psi_{11} \emtx 
			\mbox{ with } \Psi \in \RH^{n_z \times 1} \\
			M & :=\bmtx M_{11} & 0 \\ 0 & -M_{11} \emtx
			\mbox{ with } M\in \Sm^{n_z} \mbox{ and } M_{11} \succ 0
		\end{split}
	\end{align}
	It is shown in Appendix~\Rmnum{2} of \cite{balakrishnan2002lyapunov} that the pair $(\Psi,M)$ defines a valid time domain IQC for $\Delta$ over any $T<\infty$ i.e. $\mathcal{S} \subseteq \mathcal{I}(\Psi,M)$.
\end{ex}
\begin{ex}
	\label{ex:TVUnc}
	Let $\mathcal{S}$ be the set of LTV parametric
	uncertainties $\delta(t)\in \R$ with a given norm-bound $\beta(t)$, i.e. $w(t)=\delta(t) \cdot v(t)$, 
	$|\delta(t)| \le \beta(t)$, $\forall t\in[0,T]$. Let $n_v = n_w = n$ and 
	$M_{11}:[0,T]\rightarrow \Sm^{n}$ be piecewise continuous with
	$M_{11}(t) \succ 0$, $\forall t\in [0,T]$.  Then $\Delta$ satisfies the
	IQC defined by the time-varying matrix:
	\begin{align}
		M(t) := \bmtx \beta(t)^2 \, M_{11}(t) & 0\\0 & -M_{11}(t) \emtx
	\end{align}
	and a static filter $\Psi := I_{2n}$, i.e. $\mathcal{S} \subseteq \mathcal{I}(\Psi,M)$.
\end{ex}
A library of IQCs is provided in~\cite{megretski1997system,veenman2016robust} for
various types of perturbations. Most IQCs are for bounded, causal
operators with multipliers $\Pi$ specified in the frequency domain.
Under mild assumptions, a valid time-domain IQC $\mathcal{I}(\Psi,M)$ can be constructed from $\Pi$ via a $J$-spectral factorization
\cite{seiler2014stability}.

\subsection{Dissipation Inequality Condition}
Consider an extended system as shown in \figref{fig:IQCAugment}. This interconnection includes the IQC filter $\Psi$ but the uncertainty $\Delta$ has been removed. The
precise relation $w=\Delta(v)$ is replaced, for the analysis, by the
constraint on the filter output $z$.
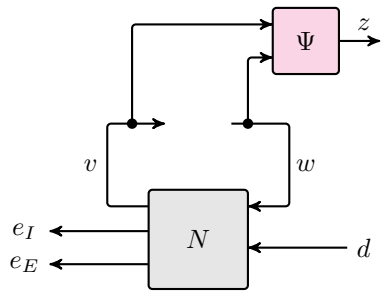
\begin{figure}[h]
	\centering
	\begin{tikzpicture}[thick,scale=1.1,rounded corners = 0.5mm]
		\draw [fill=closedloop] (0,0) rectangle node{$N$}(1.2,1.2);
		\draw [->](0,1) -- (-0.5,1) -- (-0.5,2) -- (0.2,2);
		\draw [->](1,2) -- (1.7,2) -- (1.7,1) -- (1.2,1);
		\draw [->](0,0.7) -- (-1.2,0.7);
		\draw [->](0,0.3) -- (-1.2,0.3);
		\draw [->](2.4,0.5) -- (1.2,0.5);
		\draw [->](-0.2,2) -- (-0.2,3.2)-- (1.5,3.2) ;
		\draw [->](1.2,2) -- (1.2,2.8)-- (1.5,2.8) ;
		\draw [->](2.3,3) -- (2.8,3);
		\node at (2.6,0.5) {$d$};
		\node at (1.9,1.5) {$w$};
		\node at (-0.7,1.5) {$v$};
		\node at (-1.5,0.7) {$e_I$};
		\node at (-1.5,0.3) {$e_E$};
		\draw [fill=black] (-0.2,2) ellipse (0.05 and 0.05);
		\draw [fill=black] (1.2,2) ellipse (0.05 and 0.05);
		\draw [fill=filter] (1.5,2.6) rectangle node{$\Psi$}(2.3,3.4);
		\node at (2.6,3.2) {$z$};
	\end{tikzpicture}	
	\caption{Analysis Interconnection}
	\label{fig:IQCAugment}
\end{figure}

\noindent The extended system of $N$ (Equation~\ref{eq:uLTV}) and $\Psi$ (Equation~\ref{eq:Psi}) is
governed by the following state space model:
\begin{align}
	\label{eq:extsys}
	\bmtx \dot{x}(t)\\\hline z(t)\\ e_{I}(t) \\ e_E(t)\emtx =
	\left[
	\begin{array}{c|c}
		\mathcal{A}(t) & \mathcal{B}(t)\\
		\hline
		\mathcal{C}_z(t) & \mathcal{D}_{z}(t)\\
		\mathcal{C}_{I}(t) & \mathcal{D}_{I}(t)\\
		\mathcal{C}_E(t) & 0 
	\end{array}
	\right]
	\bmtx x(t) \\ \hline \noalign{\vskip 0.05in}  \bmtx w(t)\\ d(t) \emtx\emtx
\end{align}
The extended state vector is $x:=\bsmtx x_N \\ x_\psi \esmtx \in \R^n$
where $n:=n_N+n_\psi$.  The state-space matrices are given by:
\begin{align*}
	&\mathcal{A} := \bmtx A_N & 0 \\  B_{\psi v} C_{v} & A_{\psi} \emtx,
	\mathcal{B} := \bmtx B_{w} & B_{d} \\ B_{\psi v} D_{vw} + B_{\psi w} & B_{\psi v} D_{vd}\emtx\\
	&\mathcal{C}_z := \bmtx D_{\psi v} C_{v} & C_{\psi} \emtx,
	\mathcal{C}_I := \bmtx C_{I} & 0 \emtx, \mathcal{C}_E := \bmtx C_E & 0 \emtx\\
	&\mathcal{D}_z := \bmtx D_{\psi v} D_{vw}  + D_{\psi w} &D_{\psi v} D_{vd}\emtx,
	\mathcal{D}_I = \bmtx D_{Iw} & D_{Id}\emtx
\end{align*}
The following differential linear matrix inequality (DLMI) is used to
compute an upper bound on the worst-case gain of
$\mathcal{F}_u(N,\Delta)$.
\begin{align}
	\label{eq:DLMI}
	\bmtx
	\dot{P}+\mathcal{A}^\top P+P\mathcal{A} & P\mathcal{B}\\
	\mathcal{B}^\top P & 0 
	\emtx + \bmtx
	Q & S\\
	S^\top & R
	\emtx
	+ \bmtx
	\mathcal{C}_z^\top \\ \mathcal{D}_z^\top
	\emtx M \bmtx
	\mathcal{C}_z & \mathcal{D}_z
	\emtx\leq -\epsilon I
\end{align}
This inequality depends on the IQC matrix $M$. It is compactly denoted as $DLMI_{Rob}(P,M,\gamma^2,t)\leq-\epsilon I$. This notation
emphasizes that the constraint is a DLMI in $(P,M,\gamma^2)$ for fixed
$N$, $\Psi$ and $(Q,S,R,F)$. The next theorem states a sufficient DLMI condition to bound the
generalized (robust) induced performance measure of
$\mathcal{F}_u(N,\Delta)$.  The proof is similar to Theorem~$6$ and $7$ of~\cite{seiler2019finite} and is given below for completeness. It uses IQCs~\cite{megretski1997system} and a standard dissipation argument~\cite{willems1972dissipative1,schaft1999,khalil2002nonlinear}.
\begin{theorem} 
	\label{thm:robperf}
	Consider an LTV system $N$ given by \eqref{eq:uLTV} and let
	$\Delta:\mathcal{L}^{n_v}_2[0,T]\rightarrow\mathcal{L}^{n_w}_2[0,T]$
	be an operator. Assume $\mathcal{F}_u(N,\Delta)$ is
	well-posed and $\Delta\in\mathcal{I}(\Psi,M)$. Let $Q : [0,T] \rightarrow \Sm^{n}$,
	$S : [0,T] \rightarrow \R^{n \times (n_w + n_d)}$,
	$R : [0,T] \rightarrow \Sm^{(n_w + n_d)}$, and
	$F \in \R^{n\times n}$ be defined as follows.
	\begin{align}
		\label{eq:QSRFRobPerf}
		\begin{split}
			Q  := \mathcal{C}_I^\top\mathcal{C}_I, \hspace{0.2in}  
			S := \mathcal{C}_I^\top\mathcal{D}_I, \hspace{0.2in} 
			R  := \mathcal{D}_I^\top\mathcal{D}_I - \gamma^2 \mbox{diag}\{0_{n_w}, I_{n_d}\}, \hspace{0.2in}  
			F  := \mathcal{C}_{E}(T)^\top\mathcal{C}_{E}(T)
		\end{split}
	\end{align}
	If there exists $\epsilon > 0$, $\gamma>0$ and a differentiable function $P:[0,T]\rightarrow\Sm^{n}$ such that $P(T)\geq F$ and,
	\begin{align}
		\label{eq:LMICond}
		DLMI_{Rob}(P,M,\gamma^2,t) \leq -\epsilon I \;\hspace{0.1in}\; \forall t\in[0,T]
	\end{align}
	then $\|\mathcal{F}_u(N,\Delta)\|_{[0,T]} < \gamma$.
\end{theorem}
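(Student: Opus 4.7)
The plan is to run a standard dissipation argument, using $V(x,t) := x(t)^\top P(t) x(t)$ as a storage function on the extended state $x = [x_N;\, x_\psi]$ of the analysis interconnection in \eqref{eq:extsys}, and then close the argument with the IQC inequality \eqref{eq:tdiqc} and the terminal bound $P(T) \ge F$.

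First I would fix any $d \in \mathcal{L}_2^{n_d}[0,T]$, set $w = \Delta(v)$, and consider the trajectories of the extended system with $x_N(0)=0$ and $x_\psi(0)=0$. Left/right multiplying the DLMI \eqref{eq:DLMI} by $\bmat{x^\top & w^\top & d^\top}$ and its transpose, and using $\dot x = \mathcal{A} x + \mathcal{B}\bsmtx w\\ d\esmtx$, the top-left block produces exactly $\frac{d}{dt}(x^\top P x)$. With the definitions \eqref{eq:QSRFRobPerf}, the middle block reassembles as $\|\mathcal{C}_I x + \mathcal{D}_I[w;d]\|^2 - \gamma^2\|d\|^2 = \|e_I\|^2 - \gamma^2\|d\|^2$, and the filter term contributes $z^\top M z$. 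This yields the pointwise dissipation inequality
\begin{equation*}
\frac{d}{dt}\bigl(x^\top P x\bigr) + \|e_I(t)\|^2 - \gamma^2 \|d(t)\|^2 + z(t)^\top M(t) z(t) \;\le\; -\epsilon\bigl(\|x\|^2 + \|w\|^2 + \|d\|^2\bigr).
\end{equation*}

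Next I would integrate this inequality over $[0,T]$. Because $x(0)=0$, the left endpoint of the storage term vanishes, and because $\mathcal{C}_E = [C_E\; 0]$ and $P(T) \ge F = \mathcal{C}_E(T)^\top \mathcal{C}_E(T)$, the right endpoint satisfies $x(T)^\top P(T) x(T) \ge \|e_E(T)\|_2^2$. The IQC assumption $\Delta \in \mathcal{I}(\Psi,M)$ and Definition~\ref{def:tdiqc} give $\int_0^T z^\top M z\, dt \ge 0$, so that term can be discarded without flipping the inequality. Rearranging yields
\begin{equation*}
\|e_E(T)\|_2^2 + \|e_I\|_{2,[0,T]}^2 \;\le\; (\gamma^2 - \epsilon)\,\|d\|_{2,[0,T]}^2,
\end{equation*}
after also dropping the nonnegative $\epsilon(\|x\|^2+\|w\|^2)$ contributions on the right. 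Dividing by $\|d\|_{2,[0,T]}^2$ (for $d\ne 0$) and taking the supremum gives $\|\mathcal{F}_u(N,\Delta)\|_{[0,T]}^2 \le \gamma^2 - \epsilon < \gamma^2$, which is the desired conclusion.

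I do not expect any real obstacle here: the only points requiring care are verifying that the block structure of $R$ truly produces the $-\gamma^2\|d\|^2$ term (and not a penalty on $w$) and that the zero feedthrough from $d$ to $e_E$ together with $\mathcal{C}_E = [C_E\; 0]$ makes $e_E(T) = \mathcal{C}_E(T) x(T)$, so that the terminal bound $P(T) \ge F$ exactly captures the Euclidean penalty in the definition \eqref{eq:norm} of $\|\cdot\|_{[0,T]}$. Well-posedness of $\mathcal{F}_u(N,\Delta)$ is what ensures the signals $v,w,z,x_\psi$ used in the argument are genuine $\mathcal{L}_2$ trajectories so the integration is legitimate.
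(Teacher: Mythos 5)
Your proposal is correct and follows essentially the same dissipation argument as the paper: the storage function $V(x,t)=x^\top P(t)x$, left/right multiplication of the DLMI by $[x^\top, w^\top, d^\top]$, integration over $[0,T]$, and then the IQC inequality together with $P(T)\ge F$ to absorb the terminal Euclidean penalty. The only cosmetic difference is that you keep the full $-\epsilon(\|x\|^2+\|w\|^2+\|d\|^2)$ right-hand side before discarding the extra nonnegative terms, whereas the paper writes $-\epsilon\, d^\top d$ directly; the conclusion is identical.
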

\begin{proof}
	Let $d \in \Ltwo$ and $x_{N}(0)=0$ be given.  By
	well-posedness, $F_u(N,\Delta)$ has a unique solution $(x_N,v,w,e_I,e_E)$.
	Define $x := \bsmtx x_N \\ x_\psi \esmtx$.
	Then $(x,z,e_I,e_E)$ are a solution of the extended system
	\eqref{eq:extsys} with inputs $(w,d)$ and initial condition
	$x(0)= 0$.  Moreover, $z$ satisfies the the
	IQC defined by $(\Psi,M)$. Define a storage function by $V(x,t) := x^\top P(t) x$.  Left and right
	multiply the DLMI \eqref{eq:DLMI} by $[x^\top, w^\top, d^\top]$ and its
	transpose to show that $V$ satisfies the following dissipation
	inequality for all $t\in [0,T]$:
	\begin{align}
		\label{eq:DI}
		\dot{V} + 
		\bmtx x \\ \bsmtx w \\ d \esmtx \emtx^\top
		\bmtx Q & S \\ S^\top & R \emtx 
		\bmtx x \\ \bsmtx w \\ d \esmtx \emtx
		+ z^\top M z
		\le -\epsilon \, d^\top d.
	\end{align}
	Use the choices for $(Q,S,R)$ to rewrite the second term as $e_I^\top e_I - \gamma^2 d^\top d$.
	Integrate over $[0,T]$ to obtain:
	\begin{align*}
		x(T)^\top P(T)x(T) + \int_{0}^{T}z(t)^\top M(t)z(t) \, dt + \|e_I\|^2_{2,[0,T]} \leq (\gamma^2-\epsilon) \|d\|^2_{2,[0,T]}.
	\end{align*}
	Apply $P(T)\geq F = \mathcal{C}_{E}(T)^\top\mathcal{C}_{E}(T)$ and $\Delta \in \mathcal{I}(\Psi,M)$ to conclude:
	\begin{align}
		\begin{split}
			\label{eq:RobL2toL2_WithIC}
			\|e_E(T)\|_2^2 + \|e_I\|^2_{2,[0,T]} \leq (\gamma^2-\epsilon) \|d\|^2_{2,[0,T]}.
		\end{split}
	\end{align}
	This inequality implies $\|F_u(N,\Delta)\|_{[0,T]} < \gamma$.	
\end{proof}

\subsection{Computational Approach}
Numerical implementation using IQCs often involve a fixed choice of $\Psi$ and optimization subject to the convex constraints on $M$. Two examples are provided as follows.

\begin{ex}
	\label{ex:LTIuncParam}
	Consider an LTI uncertainty $\Delta \in \RH$ with
	$\|\Delta \|_\infty \le 1$. By Example~\ref{ex:LTIunc}, $\Delta$
	satisfies any IQC $(\Psi,M)$ with
	$\Psi :=\bsmtx \Psi_{11} & 0 \\ 0 & \Psi_{11} \esmtx$, 
	$M :=\bsmtx M_{11} & 0 \\ 0 & -M_{11} \esmtx$, and
	$M_{11} \succ 0$.  A typical choice for $\Psi_{11}$ is:
	\begin{align}
		\Psi_{11}:= \left[1, \frac{1}{(s+p)},\ldots \frac{1}{(s+p)^q}  \right]^\top
		\mbox{ with } p>0
	\end{align}  
	The analysis is performed by selecting $(p,q)$ to obtain (fixed)
	$\Psi$ and optimizing over the convex constraint $M_{11} \succ 0$. The results depend
	on the choice of $(p,q)$. Larger values of $q$ represent a richer
	class of IQCs and hence yield less conservative results but with
	increasing computational cost. Note that the IQC filter $\Psi$ is not square in general with $n_z = 2(q+1)$ outputs. 
\end{ex}
\begin{ex}
	Conic combinations of multiple IQCs can be incorporated in analysis.
	Let $(\Psi_i,M_i)$ with $i=1,2,\hdots,N$ define $N$ valid IQCs for
	$\Delta$. Hence $\int_0^T z_i^\top M_i z_i \, dt \ge 0$ where $z_i$ is
	the output $\Psi_i$ driven by $v$ and $w=\Delta(v)$. The multiple constraints can be multiplied by $\lambda_i\ge 0$ and combined to yield:
	\begin{align}
		\int_0^T \sum_{i=1}^{N}\lambda_i z_i^\top M_i z_i \, dt \ge 0
	\end{align}
	Thus a valid time-domain IQC for $\Delta$ is given by 
	\begin{align}
		\Psi:=\bmtx \Psi_1\\\vdots\\\Psi_N \emtx 
		\mbox{ and }
		M(\lambda):= \bmtx 
		\lambda_1 M_1 & & \\
		&\ddots& \\
		& & \lambda_N M_N
		\emtx
	\end{align}
	The analysis optimizes over $\lambda$ given selected $(\Psi_i,M_i)$.
\end{ex}
An iterative algorithm given in \cite{seiler2019finite} is used in this paper to compute the smallest upper bound on the worst-case gain. It combines the DLMI formulation in the Theorem~\ref{thm:robperf} with a related Riccati Differential Equation (RDE). The algorithm returns the upper bound $\bar\gamma_{wc}$ along with the decision variables $P$ and $M$.

\section{Robust Synthesis}
\label{sec:RobustSynthesis}

\subsection{Problem Formulation}
An uncertain feedback interconnection is shown in
\figref{fig:UncertainCL} where $G$ is an LTV system on $[0,T]$ and $\Delta$ is assumed to lie in some set $\mathcal{S}$ that is described by valid time domain IQCs. 
\begin{figure}[h]
	\centering
	\begin{tikzpicture}[thick,rounded corners = 0.5mm,scale=1.2]
		\begin{scope}[shift={(-0.2,0)}]
			\draw[densely dotted,fill=closedloop,fill opacity=1] (-0.7,-1.4) rectangle (2.3,1.4);
		\end{scope}
		\draw [fill=plant] (0,0) rectangle node{$G$}(1.2,1.2);
		\draw [->](0,1) -- (-0.5,1) -- (-0.5,2) -- (0.2,2);
		\draw [->](1,2) -- (1.7,2) -- (1.7,1) -- (1.2,1);
		\draw [->](0,0.8) -- (-1.2,0.8);
		\draw [->](0,0.4) -- (-1.2,0.4);
		\draw [->](2.4,0.6) -- (1.2,0.6);
		\node at (2.6,0.6) {$d$};
		\node at (1.9,1.7) {$w$};
		\node at (-0.7,1.7) {$v$};
		\node at (-1.5,0.8) {$e_I$};
		\node at (-1.5,0.4) {$e_E$};
		\draw [fill=controller] (0.2,-0.5) rectangle node{$K$}(1,-1.2);
		\draw [->](0,0.2) -- (-0.5,0.2) -- (-0.5,-0.8) -- (0.2,-0.8);
		\draw [->](1,-0.8) -- (1.7,-0.8) -- (1.7,0.2) -- (1.2,0.2);
		\node at (-0.7,-0.3) {$y$};
		\node at (1.9,-0.3) {$u$};
		\draw [fill=unc] (0.2,1.6) rectangle node{$\Delta$}(1,2.4);
		\node at (0.7,-1.7) {\footnotesize$N:=\mathcal{F}_l(G,K)$};
	\end{tikzpicture}	
	\caption{Uncertain Feedback Interconnection $\mathcal{F}_u(\mathcal{F}_l(G,K),\Delta)$}
	\label{fig:UncertainCL}
\end{figure}
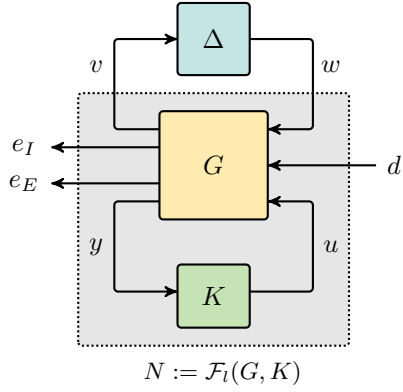\\
\noindent The finite horizon robust synthesis problem is to synthesize a controller which minimizes the impact of both worst-case disturbances and worst-case uncertainties, i.e.:
\begin{align}
	\label{eq:robsyn}
	\inf_{K} \sup_{\Delta \in \mathcal{S}} 
	\|\mathcal{F}_u(\mathcal{F}_l(G,K),\Delta)\|_{[0,T]}
\end{align}
Let the LTV system $G$ defined on $[0,T]$ be given as:
\begin{align}
	\label{eq:robLTVSyn}
	\bmtx \dot{x}_G(t)\\\hline v(t)\\ e_{I}(t) \\ e_E(t) \\y(t)\emtx =
	\left[
	\begin{array}{c|c c c}
		A_G(t) & B_w(t) & B_d(t) & B_u(t)\\
		\hline
		C_v(t) & D_{vw}(t) & D_{vd}(t) & D_{vu}(t)\\
		C_{I}(t) & D_{Iw}(t) & D_{Id}(t) & D_{Iu}(t)\\
		C_E(t) & 0 & 0 & 0\\
		C_y(t) & D_{yw}(t) & D_{yd}(t) & D_{yu}(t) 
	\end{array}
	\right]
	\bmtx x_G(t)\\\hline w(t)\\ d(t) \\ u(t)\emtx
\end{align}
where $x_G \in \R^{n_G}$ is the state. This plant structure has no feedthrough from $d$ to $e_E$ for
well-posedness. The synthesis problem \eqref{eq:robsyn}
involves the worst-case gain computed over the entire uncertainty set. As noted earlier, instead we focus on minimizing worst-case gain upper bounds. In other words,  we define IQCs $\mathcal{I}(\Psi,M)$ such that $\mathcal{S}\subseteq \mathcal{I}(\Psi,M)$ and maximize over  $\Delta \in \mathcal{I}(\Psi,M)$ in Equation~\eqref{eq:robsyn}. The goal is to design a linear time-varying controller $K:\mathcal{L}^{n_y}_2[0,T]\rightarrow\mathcal{L}^{n_u}_2[0,T]$ to minimize the worst-case gain upper bound on $\mathcal{F}_u(\mathcal{F}_l(G,K),\Delta)$. This leads to a non-convex synthesis problem and involves solving for the controller as well as IQC multipliers. 

The approach taken here is to decompose the synthesis into two subproblems. First, solve a nominal synthesis problem (on a specially constructed scaled plant) to obtain $K$. Second, solve an IQC analysis problem to compute the worst-case gain upper bound.  These subproblems can be solved iteratively, similar to coordinate descent, to get a reasonable sub-optimal solution. The proposed algorithm utilizes this approach to obtain a finite horizon sub-optimal controller. As with DK synthesis, there are no guarantees that the coordinate-wise iteration will lead to a local optima let alone a global optima. However, it is a useful heuristic that will enable the robust synthesis to extended naturally from LTI to finite horizon LTV systems. The following assumption is made for the structure of IQC matrix $M$ and filter $\Psi$.
\begin{asm}
	\label{asm:blkdiag}
	The IQC decision variables $M:[0,T]\rightarrow\Sm^{n_z}$ for a specified IQC filter $\Psi:\mathcal{L}_{2}^{(n_v+n_w)}[0,T] \rightarrow \mathcal{L}_{2}^{n_z}[0,T]$ are assumed to have the following block diagonal structure
	\begin{align*}
		M(t) := \bmtx M_{v}(t) & 0\\ 0 & -M_{w}(t)\emtx, \hspace{0.1in} \Psi := \bmtx \Psi_{v} & 0\\ 0 & \Psi_{w}\emtx
	\end{align*}
	with constraints $M_{v}(t)\succ 0$ and $M_{w}(t)\succ 0$, $\forall t\in[0,T]$. Moreover, $\Psi$ has a feedthrough matrix $D_\psi(t) := \bmtx D_{\psi v}(t) & D_{\psi w}(t)\emtx \in \R^{n_z \times (n_v+n_w)}$ with full column rank $\forall t\in[0,T]$.
\end{asm}	
This block diagonal assumption is made to simplify the notation. More general IQC multipliers are considered for (infinite horizon) synthesis in \cite{veenman2014iqc}. As discussed in Example~\ref{ex:LTIuncParam}, the IQC filter $\Psi$, is typically pre-specified by a collection of basis functions. In this case, the worst-case gain condition in Theorem~\ref{thm:robperf} is a differential LMI in the variables $M_v$, $M_w$, $P$, and $\gamma^2$.  The filter $\Psi$ is, in general, non-square with $n_z \neq n_v + n_w$. The proposed synthesis method requires a non-unique factorization such that resulting factor is invertible square system i.e. $n_z = n_v + n_w$. The finite horizon factorization (Lemma~\ref{lem:spectralfact} in Appendix~\ref{sec:spectralfact}) can be used to construct square invertible systems $U_v$ and $U_w$ such that,
\begin{align}
	\label{eq:PiSpectralFact}
	\begin{split}
		\Psi_v^\sim M_v \Psi_v = U_v^\sim U_v\\
		\Psi_w^\sim M_w \Psi_v = U_w^\sim U_w
	\end{split}
\end{align}
The assumption that feedthough matrix $D_\psi(t)$ has full column rank is required for the existence of such factorization. This factorization is used in the proposed synthesis algorithm below  to construct a scaled plant.

\subsection{Algorithm}
A high-level overview of the proposed iterative method is
given in Algorithm \ref{alg:fhrobsyn}. The uncertain finite horizon
system is $\mathcal{F}_u(\mathcal{F}_l(G,K),\Delta)$ with $G$ given by
Equation~\eqref{eq:robLTVSyn} and $\Delta$ specified by
uncertainty set $\mathcal{I}(\Psi,M)$. The robust synthesis algorithm is specified to run a given maximum number of iterations $N_{syn}$.  It is initialized with scalings
$U_{v}^{(0)}:=I_{n_v}$ and $U_{w}^{(0)}:=I_{n_w}$. There is also an
initial performance scaling set to $\gamma_a^{(0)}:=1$.
\begin{figure}[h]
	\centering
	\begin{tikzpicture}[thick,rounded corners = 0.5mm,scale=1.25]
		\begin{scope}[shift={(-0.2,0)}]
			\draw[densely dotted,fill=blue,fill opacity=0.1] (-1.3,-0.9) rectangle (3.8,1.4);
		\end{scope}
		\draw [fill=plant] (0,-0.7) rectangle node{$G$}(1.9,1.2);
		\draw [fill=dist] (2.5,-0.2) rectangle node{$\frac{1}{\gamma_a}$}(3.1,0.4);
		\draw [fill=unc] (2.4,0.6) rectangle node{$U_w^{-1}$}(3.3,1.2);
		\draw [fill=unc] (-1.3,0.6) rectangle node{$U_v$}(-0.41,1.2);
		\draw [->](0,0.4) -- (-1.9,0.4);
		\draw [->](0,0) -- (-1.9,0);
		\draw [->](0,0.9) -- (-0.42,0.9);
		\draw [->](2.5,0.1) -- (1.9,0.1);
		\draw [->](2.4,0.9) -- (1.9,0.9);
		\draw [->](3.8,0.9) -- (3.3,0.9);
		\draw [->](3.81,0.1) -- (3.1,0.1);
		\draw [->](-1.3,0.9) -- (-1.9,0.9);
		\node at (2.2,0.3) {$d$};
		\node at (4.01,0.1) {$\tilde{d}$};
		\node at (1,1.6) {$G_{scl}$};
		\node at (2.2,1.1) {$w$};
		\node at (4,0.9) {$\tilde{w}$};
		\node at (-0.2,1.1) {$v$};
		\node at (-2.2,0.9) {$\tilde{v}$};
		\node at (-2.2,0.4) {$e_I$};
		\node at (-2.2,0) {$e_E$};
		\draw [fill=controller,dashed] (0.6,-1.2) rectangle node{$K$}(1.4,-1.9);
		\draw [->](0,-0.5) -- (-0.4,-0.5) -- (-0.4,-1.5) -- (0.6,-1.5);
		\draw [->](1.4,-1.5) -- (2.3,-1.5) -- (2.3,-0.5) -- (1.9,-0.5);
		\node at (2.5,-1.2) {$u$};
		\node at (-0.6,-1.2) {$y$};
	\end{tikzpicture}	
	\caption{LTV Synthesis on Scaled Plant $G_{scl}$}
	\label{fig:NomLTVSynStep}
\end{figure}
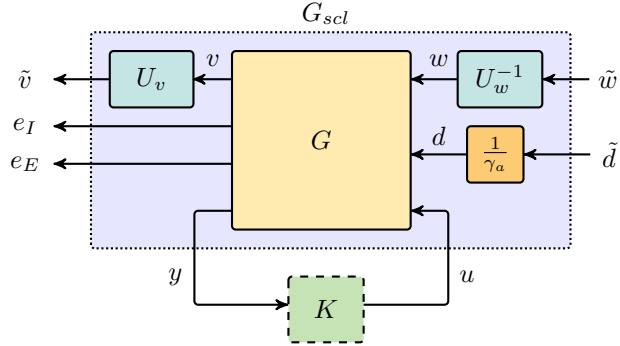

The beginning of each iteration involves the construction of a scaled plant
$G_{scl}$ as shown in \figref{fig:NomLTVSynStep}. This step is described further in the next subsection. For now, it is sufficient to note that $G_{scl}=G$ on the first iteration due to the
initialization choices.
\begin{algorithm}[t]	
	\linespread{1}\selectfont
	\caption{Finite Horizon Robust Synthesis} \label{alg:fhrobsyn}  
	\begin{algorithmic}[1]
		\State \textbf{Given:} $G$ 
		\State \textbf{Initialize:} $N_{syn}$, $U_{v}^{(0)}:=I_{n_v}$,
		$U_{w}^{(0)} := I_{n_w}$, $\gamma_a^{(0)} := 1$
		
		\For{$i=1:N_{syn}$} 
		
		\State \parbox[t]{0.9\linewidth}{\textbf{Scaled Plant
				Construction (Section~\ref{sec:scaledplantconstruction}):} Construct a scaled plant $G_{scl}^{(i)}$ using $G$, $U_{v}^{(i-1)}$,
			$U_{w}^{(i-1)}$, $\gamma_a^{(i-1)}$.\\\textbf{Output:} $G_{scl}^{(i)}$}	
		
		\vspace{0.1in}
		\Statex
		
		\State \parbox[t]{0.9\linewidth}{ \textbf{Nominal LTV Synthesis
				(Section~\ref{sec:nominalsyn}):} Perform nominal controller synthesis
			on the scaled plant $G_{scl}^{(i)}$.\\\textbf{Output:} $K^{(i)}$, $\gamma_s^{(i)}$}
		
		\vspace{0.1in}	
		\Statex 
		
		\State \parbox[t]{0.9\linewidth}{\textbf{IQC Analysis
				(Section~\ref{sec:robperf}):} Choose the basis functions for $\Psi$ and perform worst-case gain iterations on $\mathcal{F}_u(N^{(i)},\Delta)$ using iterative algorithm	presented in \cite{seiler2019finite} where $N^{(i)}:=\mathcal{F}_l(G,K^{(i)})$ denotes the closed loop LTV system. Perform finite horizon factorization using the same $\Psi$ and computed decision variables $M^{(i)}$ to compute the uncertainty channel scalings $U_v^{(i)}$ and $U_w^{(i)}$.\\	\textbf{Output:} $P^{(i)}$, $M^{(i)}$, $\gamma_a^{(i)},U_v^{(i)},U_w^{(i)}$}
		
		\Statex		
		\EndFor
	\end{algorithmic}
\end{algorithm}
The next step is to perform finite horizon
nominal synthesis on the scaled plant.  This step is performed using
the synthesis results described previously in
Section~\ref{sec:nominalsyn}. This yields a controller $K^{(i)}$ and
the achievable closed-loop performance $\gamma_s^{(i)}$. Each
iteration concludes with an IQC analysis on the uncertain closed-loop
of $N:=\mathcal{F}_l(G,K^{(i)})$ and $\Delta$ as shown in
\figref{fig:UncertainCL}. This closed-loop uses the original
(unscaled) plant $G$ and the controller $K^{(i)}$ obtained from the
nominal synthesis step. The worst-case gain upper bound
$\gamma_a^{(i)}$ is computed using the algorithm in
\cite{seiler2019finite} as summarized in
Section~\ref{sec:robperf}. This iterative algorithm requires additional initialization including number of analysis iterations $N_{iter}$, stopping tolerance $tol$, DLMI time grid $t_{DLMI}$ and spline basis function time grid $t_{sp}$, which are not included in Algorithm \ref{alg:fhrobsyn}. All subsequent iterations require the construction of a scaled plant using the IQC results. The construction of this scaled plant links together the nominal synthesis and IQC analysis steps.  It is described further in Section~\ref{sec:scaledplantconstruction}. Algorithm \ref{alg:fhrobsyn} terminates after $N_{syn}$
iterations. More sophisticated stopping conditions can be employed.  For example, the iterations could be terminated if no significant
improvement in worst-case gain is achieved. The algorithm returns the
controller of order $n_K$ that achieves the best (smallest) bound on the worst-case gain, where $n_K = n_G + n_\psi$.

\subsection{Construction of a Scaled Plant}
\label{sec:scaledplantconstruction}

The scaled open loop plant $G_{scl}^{(i)}$ is constructed as shown in
\figref{fig:NomLTVSynStep} by scaling the performance channels and
uncertainty channels of original open loop plant $G$ using $U_{v}^{(i-1)}$,
$U_{w}^{(i-1)}$ and $\gamma_a^{(i-1)}$ obtained from the previous
iteration. This scaling ensures appropriate normalization of the
performance and uncertainty channels. This is a key step which
integrates the nominal synthesis and worst-case gain problem. To simplify the notation, the superscripts $(i-1)$ will be 
dropped in the remainder of this subsection. 
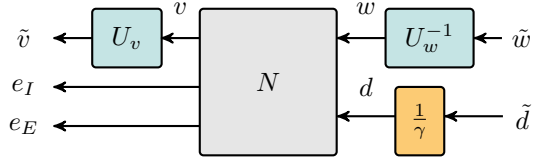
\begin{figure}[h]
	\centering
	\begin{tikzpicture}[thick,rounded corners = 0.5mm,scale=1.3]
		\draw [fill=closedloop] (0,-0.3) rectangle node{$N$}(1.4,1.2);
		\draw [fill=dist] (2,-0.3) rectangle node{$\frac{1}{\gamma}$}(2.5,0.4);
		\draw [fill=unc] (1.9,0.6) rectangle node{$U_{w}^{-1}$}(2.8,1.2);
		\draw [fill=unc] (-1.1,0.6) rectangle node{$U_{v}$}(-0.39,1.2);
		\draw [->](0,0.4) -- (-1.5,0.4);
		\draw [->](0,0) -- (-1.5,0);
		\draw [->](0,0.9) -- (-0.4,0.9);
		\draw [->](2,0.1) -- (1.4,0.1);
		\draw [->](1.9,0.9) -- (1.4,0.9);
		\draw [->](3.09,0.9) -- (2.8,0.9);
		\draw [->](3.1,0.1) -- (2.5,0.1);
		\draw [->](-1.1,0.9) -- (-1.5,0.9);
		\node at (1.7,0.4) {$d$};
		\node at (3.3,0.1) {$\tilde{d}$};
		\node at (1.7,1.2) {$w$};
		\node at (3.29,0.9) {$\tilde{w}$};
		\node at (-0.2,1.2) {$v$};
		\node at (-1.8,0.9) {$\tilde{v}$};
		\node at (-1.8,0.4) {$e_I$};
		\node at (-1.8,0) {$e_E$};
	\end{tikzpicture}
	\caption{Scaled Plant $N_{scl}$}
	\label{fig:ScaledPlant}
\end{figure}

Let $N:=\mathcal{F}_l(G,K)$ be the closed-loop (without uncertainty). For a given IQC filter $\Psi$ an extended system $N_{ext}$ similar to \figref{fig:IQCAugment} can be constructed. The next lemma gives a formal statement connecting robust performance of the extended system $N_{ext}$ to nominal performance of scaled system $N_{scl}$ as shown in \figref{fig:ScaledPlant}.

\begin{lem}
	\label{lm:scaledplant}
	Let $\epsilon > 0$, $\gamma>0$, $M_v(t)\succ 0$, $M_w(t)\succ 0$ and a differentiable function $P:[0,T]\rightarrow\Sm^{n}$ such that $P(T)\geq F$ be given with the choice of $(Q,S,R,F)$ as in Equation~\eqref{eq:QSRFRobPerf}. The following statements are equivalent:
	\begin{enumerate}
		\item $DLMI_{Rob}(P,M,\gamma^2,t) \leq -\epsilon I, \;\hspace{0.1in}\; \forall t\in[0,T].$
		\item $\|N_{scl}\|_{[0,T]} \leq 1 - \hat{\epsilon}$,\hspace{0.1in} for some $\hat{\epsilon}$.
	\end{enumerate}
\end{lem}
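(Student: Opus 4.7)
The plan is to prove the equivalence via the pointwise KYP-type identity induced by the $J$-spectral factorization of the IQC multiplier. By Lemma~\ref{lem:spectralfact}, the square invertible factors $U_v, U_w$ satisfying $\Psi_v^\sim M_v \Psi_v = U_v^\sim U_v$ and $\Psi_w^\sim M_w \Psi_w = U_w^\sim U_w$ can be realized sharing the states of $\Psi_v$ and $\Psi_w$ respectively. Consequently the state of $N_{scl}$ and the state of the extended system~\eqref{eq:extsys} coincide, namely $x = [x_N^\top, x_\psi^\top]^\top \in \R^n$. Moreover, the factorization supplies a symmetric storage matrix $K(t) = \text{diag}(K_v(t), -K_w(t))$ with $K(T)=0$ such that, pointwise along any trajectory,
\begin{align*}
z^\top M z \;=\; \tilde{v}^\top \tilde{v} - \tilde{w}^\top \tilde{w} - \tfrac{d}{dt}\!\left(x_\psi^\top K x_\psi\right),
\end{align*}
where $\tilde{v} := U_v v$ and $\tilde{w} := U_w w$.

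For the forward direction (1 $\Rightarrow$ 2), I would left- and right-multiply the DLMI by $[x^\top,w^\top,d^\top]$ and its transpose to obtain the pointwise dissipation inequality $\dot V + e_I^\top e_I + z^\top M z - \gamma^2 d^\top d \le -\epsilon(\|x\|^2 + \|w\|^2 + \|d\|^2)$ with $V = x^\top P x$, exactly as in the proof of Theorem~\ref{thm:robperf}. Substituting the factorization identity and the normalization $\gamma^2 d^\top d = \tilde{d}^\top \tilde{d}$ (with $\tilde{d} := \gamma d$) yields
\begin{align*}
\dot{\tilde V} + e_I^\top e_I + \tilde{v}^\top \tilde{v} - \tilde{w}^\top \tilde{w} - \tilde{d}^\top \tilde{d} \;\le\; -\epsilon(\|x\|^2 + \|w\|^2 + \|d\|^2),
\end{align*}
where $\tilde V := V - x_\psi^\top K x_\psi$. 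I would then integrate over $[0,T]$ using $x(0)=0$ to kill $\tilde V(0)$, and use $P(T)\ge F$ together with $K(T)=0$ to bound $\tilde V(T)\ge \|e_E(T)\|^2$. Finally I would bound the right-hand side below via $\|w\|^2 \ge \sigma_{\min}(U_w^{-1})^2 \|\tilde{w}\|^2$ and $\|d\|^2 = \gamma^{-2}\|\tilde{d}\|^2$, both strictly positive and continuous on the compact interval $[0,T]$, producing $-\hat{\epsilon}_0(\|\tilde{w}\|_{2,[0,T]}^2 + \|\tilde{d}\|_{2,[0,T]}^2)$ for some $\hat{\epsilon}_0>0$. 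The resulting gain bound reads $\|N_{scl}\|_{[0,T]} \le 1-\hat{\epsilon}$ with $1-\hat{\epsilon} := \sqrt{1-\hat{\epsilon}_0}$.

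For the reverse direction (2 $\Rightarrow$ 1), I would invoke a DLMI characterization of nominal performance (the strict DLMI counterpart of Theorem~\ref{thm:nominalperf}) applied to $N_{scl}$: $\|N_{scl}\|_{[0,T]} \le 1-\hat{\epsilon}$ produces a differentiable $\tilde P$ with $\tilde P(T) \ge F$ satisfying the scaled-plant DLMI. Setting $P := \tilde P + \text{diag}(0, K)$ and reversing the pointwise factorization identity undoes the change of storage function and recovers $DLMI_{Rob}(P,M,\gamma^2,t) \le -\epsilon I$ for an $\epsilon>0$ determined by $\hat{\epsilon}$ and the boundedness of the scalings. The main obstacle is establishing the pointwise dynamic identity from the factorization: the equation $\Psi^\sim M \Psi = U^\sim U$ is an operator-level statement, and its pointwise form requires the KYP-type certificate $K(t)$ with the correct terminal condition $K(T)=0$ and shared state realizations. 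The converse direction additionally requires an explicit DLMI (rather than RDE) characterization of nominal performance with a mixed $\mathcal{L}_2$/terminal-Euclidean penalty, which is routine but needs to be stated precisely for the scaled plant.
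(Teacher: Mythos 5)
Your proposal follows essentially the same route as the paper's proof: the finite-horizon factorization with shared states yields the identity $z^\top M z = \tilde v^\top \tilde v - \tilde w^\top \tilde w - \tfrac{d}{dt}\bigl(x_\psi^\top X x_\psi\bigr)$ with $X(T)=0$, the modified storage $\tilde P := P - \operatorname{diag}(0,X)$ turns the robust DLMI into a nominal dissipation inequality for the scaled plant (whose realization coincides with the extended system once $w$ is eliminated through the invertible factor), and the converse is obtained by reversing these steps via the nominal-performance characterization. One minor caveat: your bound $\|w\|^2 \ge \sigma_{\min}(U_w^{-1})^2\|\tilde w\|^2$ is valid only for the integrated $\mathcal{L}_2[0,T]$ norms (using that $U_w$ is a bounded operator with bounded inverse on the finite horizon), not pointwise in time, since $U_w$ is a dynamic system; the paper sidesteps this by retaining only the $-\epsilon\, d^\top d$ term and defining $\hat\epsilon$ accordingly.
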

\begin{proof}
	A proof of this lemma is given in Appendix~\ref{sec:proof1}. It uses a time-varying factorization of $(\Psi,M)$ to construct $U_v$ and $U_w$.
\end{proof}
\noindent The above lemma states that extended system given by Equation~\eqref{eq:extsys} satisfies the robust performance condition \eqref{eq:LMICond} if and only if the scaled system has nominal performance less than 1. 

\subsection{Main Theorem}
\label{sec:mainth}

The plant $G_{scl}^{(1)} = G$ for the robust synthesis may include the	uncertainty and performance channel design weights as in standard robust control workflow \cite{zhou1996robust,dullerud2013course}. These weights can be static, dynamic and/or time-varying depending on the requirement. Typically, multiple design iterations are performed to tune these weights and yield an acceptable trade-off between robustness and performance. Note that the first nominal LTV synthesis step in Algorithm \ref{alg:fhrobsyn} may not yield a finite performance $\gamma_s^{(1)}$. For example, if the uncertainty level is too high then the RDEs for nominal synthesis of $G_{scl}^{(1)} = G$ may not have a solution on $[0,T]$ for any finite $\gamma_s^{(1)}$. However, in this case, finite performance can be achieved by reducing the uncertainty level and restarting the iteration. The main theorem is presented next with a technical assumption that the first nominal synthesis step yields a finite performance.

\begin{theorem}
	If the first nominal synthesis step yields a finite performance $\gamma_s^{(1)}$ then all the subsequent iterations are well-posed at each step and worst-case gain is non-increasing, i.e.
	\begin{align*}
		\gamma_a^{(i+1)}\leq\gamma_a^{(i)} \hspace{0.2in} \forall{i} \geq 1
	\end{align*} 
\end{theorem}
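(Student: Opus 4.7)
The plan is an induction on $i$. The key technical driver is Lemma \ref{lm:scaledplant}, which establishes that the robust performance DLMI for the extended system is equivalent to nominal performance $\le 1-\hat\epsilon$ of the scaled plant closed-loop. I would use this equivalence in both directions: once to carry the previous IQC certificate into a nominal synthesis feasibility statement, and once to pull the resulting nominal bound back to a new IQC certificate.

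For the inductive step, assume iteration $i$ succeeded and produced $(K^{(i)},P^{(i)},M^{(i)},\gamma_a^{(i)})$ together with scalings $(U_v^{(i)},U_w^{(i)})$ coming from the factorization \eqref{eq:PiSpectralFact}. First I would observe that $\mathcal{F}_l(G_{scl}^{(i+1)},K^{(i)})$ is, by construction of the scaled plant in Figure \ref{fig:NomLTVSynStep}, precisely the scaled closed-loop $N_{scl}^{(i)}$ obtained from $N^{(i)}:=\mathcal{F}_l(G,K^{(i)})$ by inserting $U_v^{(i)}$, $U_w^{(i)^{-1}}$, and $1/\gamma_a^{(i)}$ on the appropriate channels. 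Since $(P^{(i)},M^{(i)},\gamma_a^{(i)})$ satisfy $DLMI_{Rob}\le-\epsilon I$ on $[0,T]$, Lemma \ref{lm:scaledplant} gives $\|\mathcal{F}_l(G_{scl}^{(i+1)},K^{(i)})\|_{[0,T]}\le 1-\hat\epsilon$ for some $\hat\epsilon>0$. Thus $K^{(i)}$ is a feasible (suboptimal) controller for the nominal synthesis on $G_{scl}^{(i+1)}$, which via Theorem \ref{thm:nomsyn} guarantees that the coupled RDEs and spectral radius condition for nominal synthesis admit a solution for any $\gamma>1-\hat\epsilon$. This establishes well-posedness of the nominal synthesis step at iteration $i+1$ and yields a central controller $K^{(i+1)}$ achieving $\gamma_s^{(i+1)}\le 1-\hat\epsilon<1$.

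Next I would push the bound back through the scaling. Since $\|\mathcal{F}_l(G_{scl}^{(i+1)},K^{(i+1)})\|_{[0,T]}\le\gamma_s^{(i+1)}<1$, the reverse direction of Lemma \ref{lm:scaledplant} (applied with the same $\Psi$, the same $U_v^{(i)},U_w^{(i)}$, and $\gamma=\gamma_a^{(i)}$) produces a differentiable $\tilde P$ and an $\tilde M$ in the block-diagonal class of Assumption \ref{asm:blkdiag} satisfying $DLMI_{Rob}(\tilde P,\tilde M,(\gamma_a^{(i)})^2,t)\le-\tilde\epsilon I$ for the extended system built from $N^{(i+1)}:=\mathcal{F}_l(G,K^{(i+1)})$ and $\Psi$. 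By Theorem \ref{thm:robperf} this $(\tilde P,\tilde M)$ is a feasible point of the IQC analysis problem at iteration $i+1$ with achieved bound $\gamma_a^{(i)}$. Since the analysis routine in \cite{seiler2019finite} returns the smallest feasible bound, we conclude $\gamma_a^{(i+1)}\le\gamma_a^{(i)}$.

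The base case amounts to verifying that the iterates are well-defined at $i=1$: the hypothesis that the first nominal synthesis step produces a finite $\gamma_s^{(1)}$ gives a controller $K^{(1)}$, after which the IQC analysis on $\mathcal{F}_l(G,K^{(1)})$ produces $\gamma_a^{(1)}$ (possibly $+\infty$), and the factorization \eqref{eq:PiSpectralFact} is available because Assumption \ref{asm:blkdiag} ensures $D_\psi$ has full column rank. I expect the main obstacle to be bookkeeping: verifying that the two applications of Lemma \ref{lm:scaledplant} use consistent scalings (same $\Psi$, same factorization), confirming that $\mathcal{F}_l(G_{scl}^{(i+1)},K)$ is literally the scaled closed-loop of $\mathcal{F}_l(G,K)$ so one can move freely between ``scaled plant closed with $K$'' and ``closed-loop then scaled'', and tracking the strict inequality $\hat\epsilon>0$ coming from the analysis tolerance so that the nominal RDEs are guaranteed to admit solutions on $[0,T]$ and hence every subsequent iteration is well-posed.
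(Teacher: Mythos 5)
Your proposal follows essentially the same argument as the paper: use the previous iteration's IQC certificate $(P^{(i)},M^{(i)},\gamma_a^{(i)})$ and the factorization to build the scaled plant, apply Lemma~\ref{lm:scaledplant} in the forward direction to show the old controller already achieves nominal performance $<1$ on $G_{scl}^{(i+1)}$ (so the synthesis step is well-posed and $\gamma_s^{(i+1)}<1$), then apply the lemma in the reverse direction to obtain a feasible point of the analysis DLMI for $N^{(i+1)}$ at level $\gamma_a^{(i)}$, whence $\gamma_a^{(i+1)}\le\gamma_a^{(i)}$. This matches the paper's proof (with your explicit remark that $\mathcal{F}_l(G_{scl}^{(i+1)},K^{(i)})$ equals the scaled closed loop being the same observation the paper makes implicitly), so the proposal is correct and not a genuinely different route.
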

\begin{proof}
	The first iteration $(i=1)$ is different from the subsequent one. Due to initialization choices $G_{scl}^{(1)}=G$. The synthesis step is performed with no modifications and yields a controller $K^{(1)}$ that guarantees the closed loop performance of $\gamma_s^{(1)}$. By assumption, we have $\gamma_s^{(1)}<\infty$. The IQC analysis step performed on the closed loop $N^{(1)} := \mathcal{F}_l(G,K^{(1)})$ uncertain plant then achieves a finite horizon worst-case gain upper bound of $\gamma_a^{(1)}<\infty$. Thus, the first iteration is well-posed.
	\vspace{0.1in}
	
	All subsequent iterations $(i>1)$ begin with the iteration count update in the for loop. The IQC analysis step from previous iteration shows that there exists ($P^{(i-1)}$, $M^{(i-1)}$, $\gamma_a^{(i-1)}$) for a chosen $\Psi$ that satisfies DLMI \eqref{eq:DLMI}. This implies that the finite horizon factorization exists and multipliers $U_v^{(i-1)}$ and $U_w^{(i-1)}$ can be obtained using Lemma~\ref{lem:spectralfact} in Appendix~\ref{sec:spectralfact}. Using these multipliers and worst-case gain $\gamma_a^{(i-1)}$, scaled plant similar to \figref{fig:ScaledPlant} can be constructed. By Lemma~\ref{lm:scaledplant}, this scaled plant satisfies nominal performance $< 1$. Removing the controller yields the scaled open-loop plant $G_{scl}^{(i)}$. Thus, the construction of a scaled open-loop plant as shown in \figref{fig:NomLTVSynStep} is well-defined. The synthesis step performed on $G_{scl}^{(i)}$ optimizes over all time-varying finite horizon controllers to yield a new controller $K^{(i)}$ that guarantees performance $\gamma_s^{(i)} < 1$. This new controller $K^{(i)}$ yields better nominal performance than the previous controller $K^{(i-1)}$ when used with the unscaled plant $G$. Thus, the closed loop $N^{(i)} :=\mathcal{F}_u(G,K^{(i)})$ must satisfy the nominal performance $<1$ when using $\gamma_a^{(i-1)}$. Lemma~\ref{lm:scaledplant} can be used backwards in the next analysis step of $N^{(i)}$. Specifically, the closed loop with unscaled plant $G$ and $K^{(i)}$ satisfies the DLMI analysis condition with ($P^{(i-1)}$, $M^{(i-1)}$, $\gamma_a^{(i-1)}$). Further, analysis step on $N^{(i)} := \mathcal{F}_l(G,K^{(i)})$ optimizes over all feasible $P$ and $M$. This yields a worst-case gain $\gamma_a^{(i)}$ no greater than the previous step $\gamma_a^{(i-1)}$. Thus $\forall{i}\geq 1$, we have $\gamma_a^{(i+1)}\leq\gamma_a^{(i)}$.
\end{proof}

\section{Numerical Examples}
\label{sec:examples}

\subsection{LTI Example}
\label{sec:LTIEx}

Consider a first order linear time-invariant (LTI) system $G$ with the following dynamics:
\begin{align}
	\dot{x}(t) &= 0.5\, x(t) +  u(t) + w(t) + d_{in}(t)\\
	v(t) &= u(t) + d_{in}(t)\\
	e_I(t) &= \bmtx x(t)\\ 0.2 \, u(t) \emtx\\
	y(t) &= x(t) + 0.01\, n(t)
\end{align}
where the performance output is $e_I(t) \in \R^2$ and measurement output is $y(t) \in \R$. The generalized disturbance input is $d(t):=\ \bsmtx d_{in}(t)\\ n(t)\esmtx$ where $d_{in}(t)\in\R$ is an external disturbance input and $n(t)\in\R$ is a measurement noise. $\Delta$ is a norm bounded, time-varying, nonlinear uncertainty with norm bound $\beta = 0.6$. The goal is to design a measurement feedback controller $K$ that minimizes the worst-case induced $\mathcal{L}_2$-gain from disturbance $d$ to output $e_I$.

First, an infinite horizon robust synthesis is performed to minimize the worst-case gain of the closed loop system. This is achieved by performing bisection on the gain $\gamma$ for a scaled plant until the robust performance is equal to $1$. At each bisection step, MATLAB's \texttt{musyn} function is called to optimize the structured singular value upper bound $\bar{\mu}$. This function uses an iterative control design process (DK-iteration) to optimizes the $\bar{\mu}$ of the closed loop system. The infinite horizon worst-case induced $\mathcal{L}_2$-gain for the designed robust controller is $0.0563$. Next, finite horizon robust synthesis is performed on a relatively long horizon (i.e. $T = 300$ seconds) using the method proposed in this paper. The closed-loop worst-case gain achieved by a finite horizon time-varying controller is computed as $0.0560$. This simple comparison results show that on a relatively long horizon the worst-case gain achieved using the finite horizon controller approaches to that of the value achieved by an infinite horizon controller. Note that the proposed method uses purely time-domain approach whereas the $\mu$-synthesis method uses the frequency gridding approach to approximate $\bar{\mu}$ and the associated $D$-scales. Thus, the close agreement between the two worst-case gains on this example may not hold in general.

\subsection{Nonlinear Example}
\label{sec:NLEx}

Consider an example of a two-link robot arm as shown in the \figref{fig:twoLinkRobot}. The mass and moment of inertia of the $i^{th}$ link are denoted by $m_i$ and $I_i$. The robot properties are $m_1=3kg$,
$m_2 = 2kg$, $l_1 = l_2= 0.3m$, $r_1=r_2 = 0.15m$,
$I_1= 0.09 kg\cdot m^2$, and $I_2= 0.06 kg\cdot m^2$. The nonlinear equations of motion \cite{murray1994mathematical} for the robot are given by:
\begin{eqnarray}
	\nonumber
	&&  \bmat{\alpha+ 2\beta\cos(\theta_2) & \delta + \beta \cos(\theta_2) \\
		\delta +  \beta \cos(\theta_2) & \delta}
	\bmat{\ddot{\theta}_1 \\ \ddot{\theta}_2} +  \bmat{-\beta \sin(\theta_2) \dot{\theta}_2 &
		-\beta \sin(\theta_2) (\dot{\theta}_1 + \dot{\theta}_2) \\
		\beta \sin(\theta_2) \dot{\theta}_1 & 0} \bmat{\dot{\theta}_1 \\
		\dot{\theta}_2} =\bmat{\tau_1 \\ \tau_2}  \label{eq:linkDyns}
	\\
	\nonumber
	&& \mbox{with}
	\\
	\nonumber
	&& \alpha := I_{1} + I_{2} + m_1r_1^2 + m_2(l_1^2 + r_2^2) = 0.4425    \, kg \cdot m^2\\
	\nonumber
	&& \beta := m_2 l_1 r_2 = 0.09 \, kg \cdot m^2 \\
	\nonumber
	&& \delta := I_{2} + m_2 r_2^2 = 0.105 \, kg \cdot m^2.
\end{eqnarray}
The state and input are
$\eta :=[\theta_1  \ \theta_2  \ \dot{\theta}_1 \
\dot{\theta}_2]^\top$ and $\tau :=[\tau_1 \ \tau_2]^\top$,
where $\tau_i$ is the torque applied to the base of link $i$. 
\begin{figure}[h] 
	\centering
	\includegraphics[scale=0.35]{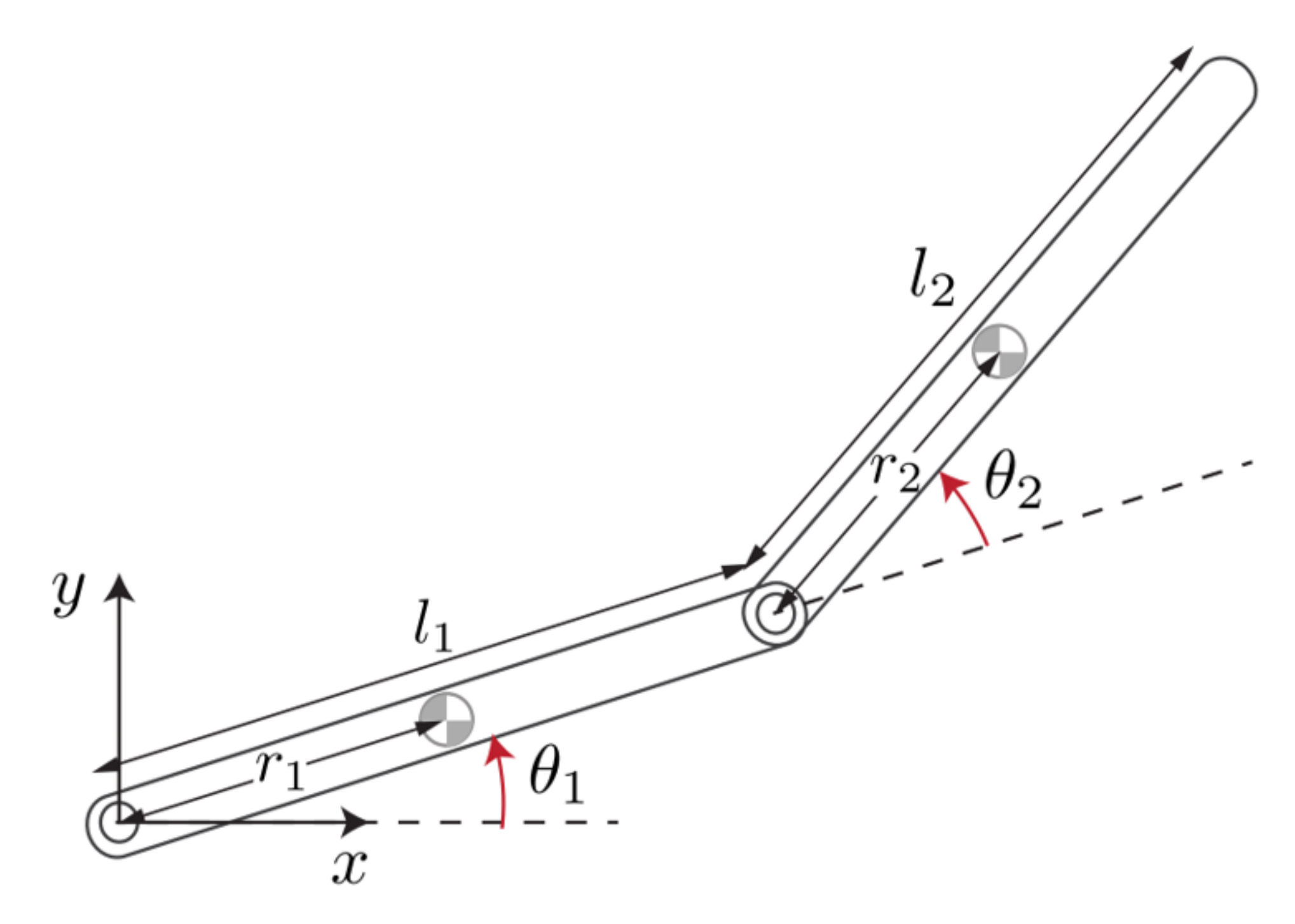}
	\caption{Two-link Planar Robot \cite{murray1994mathematical}.}
	\label{fig:twoLinkRobot}
\end{figure}

A trajectory $\bar{\eta}$ of duration $5$ second was selected for the tip of the arm to follow. This trajectory is shown as a solid black line in \figref{fig:NominalTrajectory1}. An equivalent trajectory in polar coordinates is also shown in \figref{fig:NominalTrajectory2}. The equilibrium input torque $\bar{\tau}$ can be computed using inverse kinematics.
\begin{figure}[h]
	\centering
	\begin{minipage}{.5\textwidth}
		\centering
		\includegraphics[width=\linewidth]{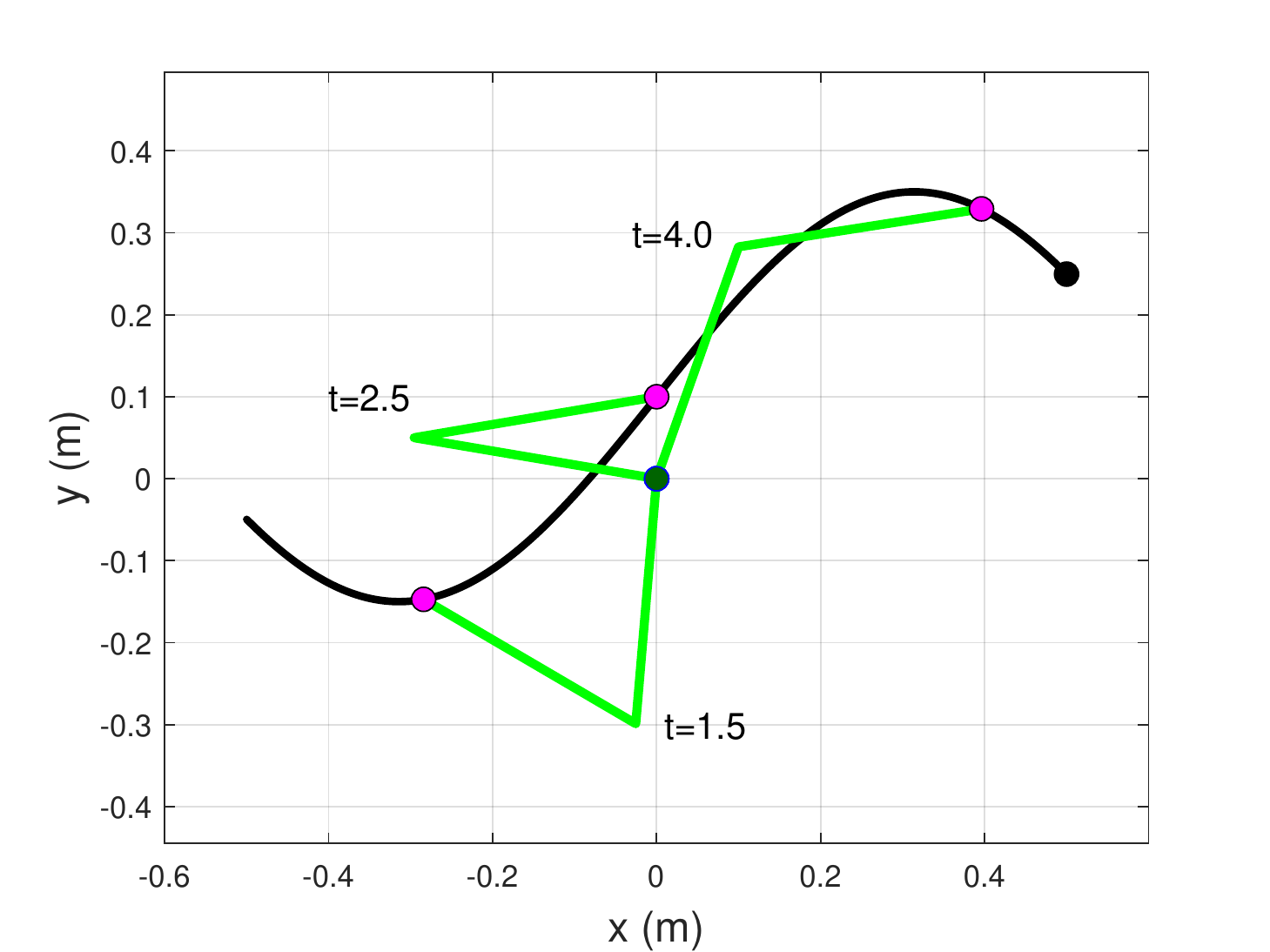}
		\caption{Snapshot Positions in Cartesian Coordinates}
		\label{fig:NominalTrajectory1}
	\end{minipage}%
	\begin{minipage}{.5\textwidth}
		\centering
		\includegraphics[width=\linewidth]{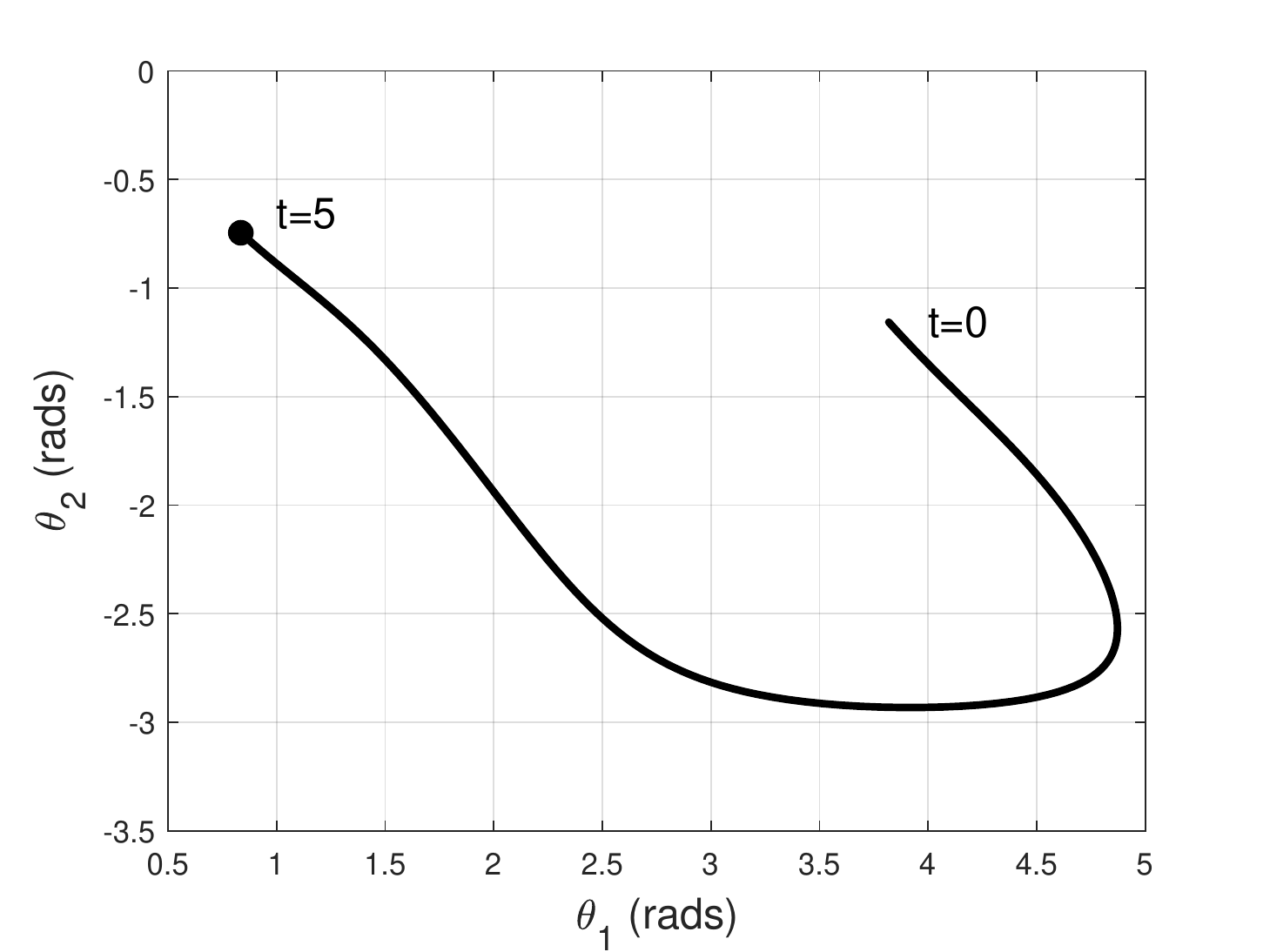}
		\caption{Nominal Trajectory in Polar Coordinates}
		\label{fig:NominalTrajectory2}
	\end{minipage}
\end{figure}
The robot should track this trajectory in the presence of small torque
disturbances $d_{in}$. The input torque vector is
$\tau = \bar{\tau} + u + d_{in}$ where $u$ is an additional control torque to reject the disturbances. The nonlinear dynamics
\eqref{eq:linkDyns} are linearized around the trajectory
$(\bar{\eta}, \bar{\tau})$ to obtain an LTV system $H$:
\begin{align}
	\dot{x}(t) = A(t) \, x(t) + B(t) \left( u(t) + d_{in}(t) \right)
\end{align}
where $x(t):=\eta(t)-\bar{\eta}(t)$ is the deviation from the
equilibrium trajectory. An uncertain output feedback weighted interconnection of $H$ is shown in the \figref{fig:SignalWeightedDesign}. Let $\delta\theta := \bsmtx \delta\theta_1 \\ \delta\theta_2 \esmtx$ represent first-order perturbations in angular positions, which is the output of interest $e_E$. The measurement is also $\delta\theta$ but corrupted by noise $n=\bsmtx n_1 \\ n_2 \esmtx$ and is fed to the controller as $y = \delta\theta + n$. The controller generates a commanded
torque $u = \bsmtx u_1 \\ u_2 \esmtx$ is corrupted by input disturbance $d_{in} = \bsmtx d_1\\ d_2\esmtx$. The second control channel gets further corrupted by LTI input uncertainty $\Delta$.
\begin{figure}[t]
	\centering
	\begin{tikzpicture}[thick,scale=1.3,rounded corners = 0.5mm,every node/.style={scale=1}]
		\draw [fill=plant] (-2.28,0) rectangle node{$H$}(-1.3,0.9);
		\draw [->](-1.3,0.7) -- (-0.5,0.7);
		\draw [->](-4.2,0.7) -- (-2.3,0.7);
		\draw [-](-4.5,0.2) -- (-4.7045,0.2);
		\draw [-](-4.4,-1.1) -- (-4.8,-1.1);
		\draw [->](-5.62,0.7) -- (-4.61,0.7);
		\draw [->](-4.3,0.2) -- (-3.1415,0.2);
		\draw [->](-2.7415,0.2) -- (-2.3,0.2);
		\draw [->](-5.6,0.2) -- (-5.1045,0.2);
		\draw [->](-6.92,0.7) -- (-6.32,0.7);
		\draw [->](-6.92,0.2) -- (-6.3,0.2);
		\draw [->](-1.3,0.2) -- (-0.9,0.2) -- (-0.9,-1.1);
		\draw [->](-0.2,-1.3) -- (-0.7,-1.3);
		\draw [->](0.9,-1.3) -- (0.5,-1.3);
		\draw [->](-4.0515,0.2) -- (-4.0415,-0.4) -- (-3.7315,-0.4);
		\draw [->](-1.1,-1.3) -- (-1.7,-1.3);
		\draw [->](0.2,0.7) -- (0.8,0.7);
		\draw [->](-5,-1.1) -- (-5.6,-1.1);
		\draw [->](-4.9,-1.6) -- (-5.6,-1.6);
		\draw [->](-6.3,-1.1) -- (-6.9,-1.1);
		\draw [->](-6.3,-1.6) -- (-6.9,-1.6);
		\draw [->](-3.1415,-0.4) -- (-2.9415,-0.4) -- (-2.9415,0);
		\node at (0.5,1) {$\tilde{e}_E$};
		\node at (-0.9,1) {$e_E$};
		\node at (-1.4,-1.6) {$y$};
		\node at (-5.3,-0.9) {$u_1$};
		\node at (-5.3,-1.4) {$u_2$};
		\node at (-6.7,-0.8732) {$\tilde{u}_1$};
		\node at (-6.7,-1.4) {$\tilde{u}_2$};
		\node at (0.8,-1) {$\tilde{n}$};
		\node at (-4.2,-0.4) {$v$};
		\node at (-2.7415,-0.4) {$w$};
		\node at (-6.7,0.9) {$\tilde{d}_1$};
		\node at (-6.7,0.4) {$\tilde{d}_2$};
		\node at (-5.4,0.9) {$d_1$};
		\node at (-5.4,0.4) {$d_2$};
		\node at (-0.46,-1) {$n$};
		\node at (-4.68,0.5) {$+$};
		\node at (-1.2,-1) {$+$};
		\node at (-2.7415,-0.1) {$+$};
		\node at (-5.2045,0) {$+$};
		\draw [fill=controller,dashed] (-2.6,-0.9) rectangle node{$K$}(-1.7,-1.8);
		\draw [fill=unc] (-3.7415,-0.1) rectangle node{$\Delta$}(-3.1415,-0.7);
		\draw [fill=dist] (-6.32,0.9) rectangle node{$W_{d}$}(-5.62,0);
		\draw [fill=dist] (-0.5,1) rectangle node{$W_E$}(0.2,0.4);
		\draw [fill=dist] (-6.3,-0.8732) rectangle node{$W_u$}(-5.6,-1.8);
		\draw [fill=dist] (-0.2,-1) rectangle node{$W_n$}(0.5,-1.6);
		\draw [->](-2.6,-1.1) -- (-4.4,-1.1) -- (-4.4,0.5);
		\draw [->](-2.6,-1.6) -- (-4.9,-1.6) -- (-4.9045,0);
		\draw  (-4.4,0.7) ellipse (0.2 and 0.2);
		\draw  (-4.9,0.2) ellipse (0.2 and 0.2);
		\draw  (-0.9,-1.3) ellipse (0.2 and 0.2);
		\draw  (-2.9415,0.2) ellipse (0.2 and 0.2);
		\draw [fill=black] (-4.05,0.2) ellipse (0.05 and 0.05);
		\draw [fill=black] (-4.4,-1.1) ellipse (0.05 and 0.05);
		\draw [fill=black] (-4.9,-1.6) ellipse (0.05 and 0.05);
		\draw (-4.3,0.2) arc (0:-180:0.1);
		\draw (-5,-1.1) arc (-180:0:0.1);
	\end{tikzpicture}
	\caption{Uncertain Output Feedback Weighted Interconnection}
	\label{fig:SignalWeightedDesign}
\end{figure}
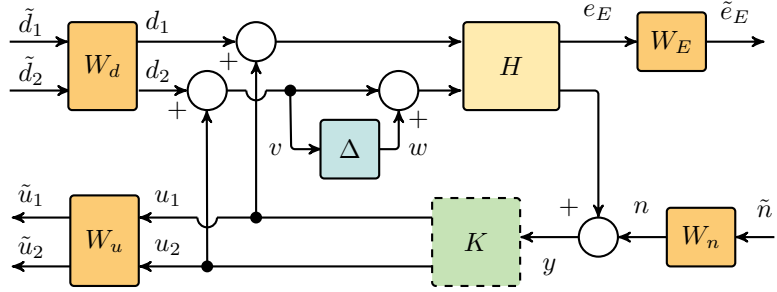
The plant input perturbation $\Delta$ is
a SISO LTI system with $\|\Delta\|_\infty \le \beta$ where uncertainty level $\beta = 0.8$. This corresponds to the uncertainty set as discussed in
Example~\ref{ex:LTIunc}. The synthesis objective is to minimize the closed-loop, worst-case
gain from the generalized disturbance
$\tilde{d}:=\bsmtx \tilde{d}_{in} \\ \tilde{n}\esmtx = \bsmtx \tilde{d}_1 & \tilde{d}_2 & \tilde{n}_1 & \tilde{n}_2 \esmtx^\top$ to the generalized error $\tilde{e}:=\bsmtx \tilde{u} \\ \tilde{e}_E \esmtx$.  The weighted control
effort $\tilde{u}$ is penalized in an $\mathcal{L}_2[0,T]$ sense while
$\tilde{e}_E$ is penalized with a terminal Euclidean norm at
$T=5$ second. Let $I_2 := \bsmtx 1 & 0\\ 0 & 1\esmtx$. The following (constant) design weights are chosen for the performance channels:
\begin{align*}
	W_d = 0.1 \, I_2, \,\,\,\,\, W_n = 0.01 \, I_2, \,\,\,\,\, W_u = 0.5 \, I_2, \,\,\,\,\, W_E = I_2
\end{align*}
The design weight associated with the uncertainty channels are not considered in this example, however, in general, the weights $W_v$ and $W_w$ can also be used for the respective uncertainty channels. As noted earlier, these design weights can be dynamic and/or time-varying. Let $\tilde{G}$ denote this weighted design interconnection for robust synthesis. It can be expressed in state-space form as in Equation~\eqref{eq:robLTVSyn}. Algorithm~\ref{alg:fhrobsyn} is run with $N_{syn} = 7$ iterations. No significant improvement is obtained after $7^{th}$ iteration. The IQC analysis step is performed based on the approach in \cite{seiler2019finite} and using parameterization similar to Example~\ref{ex:LTIuncParam} with $p = 10$, $q = 1$, $tol = 5 \times 10^{-3}$, $N_{iter} = 10$, $t_{DLMI}$ as $20$ and $\tau_{sp}$ as $10$ evenly spaced grid points on the horizon $[0,5]$ seconds. 

Let $K_{0.8}$ denote the controller obtained at the end of the robust
synthesis algorithm. This controller achieves the closed-loop
worst-case performance of $\gamma_{0.8}=0.126$. It took $11.7$ hours to complete the $7$ iterations on a standard desktop computer with $3$ GHz Core i7 processor. 
In addition, a nominal synthesis with $\Delta=0$ was performed using the
approach in Section \ref{sec:nominalsyn}. This controller, denoted
as $K_0$, achieves a closed-loop nominal performance of
$\gamma_0 = 0.089$. It took $49.8$ seconds to perform this nominal synthesis. The corresponding uncertain closed-loops with the nominal
and robust controllers are denoted by
$\tilde{T}_0 := \mathcal{F}_u(\mathcal{F}_l(\tilde{G},K_0),\Delta)$
and
$\tilde{T}_{0.8} :=
\mathcal{F}_u(\mathcal{F}_l(\tilde{G},K_{0.8}),\Delta)$. \figref{fig:CompareBothControllers} shows the worst-case performance versus
the uncertainty level $\beta$ for the uncertain closed-loops with
these two controllers. The curve for $\tilde{T}_0$ (blue circles) has
$\gamma=0.089$ at $\beta=0$ as reported above. The curve for $\tilde{T}_{0.8}$
(red squares) has $\gamma=0.126$ at $\beta=0.8$ as also reported
above. This figure reveals the typical trade-off between performance
and robustness. The nominal controller $K_0$ achieves better nominal
performance ($\beta=0$) than $K_{0.8}$. However, $K_{0.8}$ is more
robust to higher levels of uncertainties.

Note that each data point in~\figref{fig:CompareBothControllers} represents a worst-case gain induced from the generalized disturbance input $\tilde{d}$ to the generalized error output $\tilde{e}$. Both $\tilde{d}$ and $\tilde{e}$ have two components which can further be analyzed using induced gain for individual input-output pairs. Table~\ref{table:1} shows an analysis with no uncertainty (level $\beta=0$) performed for the closed-loops with nominal control design $\tilde{T}_0$ (blue) and robust control design $\tilde{T}_{0.8}$ (red). It is evident that the induced gain from noise $\tilde{n}$ to control effort $\tilde{u}$ dominates the overall performance for both the interconnections. Moreover, the induced $\mathcal{L}_2$-gain from $\tilde{n}$ to $\tilde{u}$ for $\tilde{T}_{0.8}$ is $0.124$ which is approximately $41\%$ higher than the corresponding value for $\tilde{T}_{0}$ ($=0.088$). Likewise, the induced $\mathcal{L}_2$-to-Euclidean gain from disturbance $\tilde{d}_{in}$ to $\tilde{e}_E(T)$ is approximately $16.6\%$ higher for $\tilde{T}_{0.8}$ ($=0.077$) as compared to $\tilde{T}_{0}$ ($=0.066$). The combined effect of the disturbance and noise is responsible for performance degradation of the robust controller at $\beta = 0$. Table~\ref{table:2} shows the worst-case gain upper bounds for robust analysis performed at an uncertainty level $\beta = 0.8$ for both interconnections. Note that the closed-loop with robust controller has the same worst-case induced $\mathcal{L}_2$-gain of $0.124$ from $\tilde{n}$ to $\tilde{u}$ as in Table~\ref{table:1}. Since, the robust controller explicitly accounts for model uncertainty, it has approximately $37.8\%$ lower worst-case induced $\mathcal{L}_2$-gain from $\tilde{d}_{in}$ to $\tilde{u}$ compared to the nominal controller. Similarly, the robust controller performs better in terms of bounding the Euclidean outputs as compared to the nominal controller at $\beta = 0.8$. Overall, the disturbance rejection property for the nominal controller is degraded more from Table~\ref{table:1} to Table~\ref{table:2} as compared to the robust controller. This observation is consistent with known frequency domain insights for infinite horizon LTI systems, as the high frequency noise rejection properties are typically less impacted by model uncertainties than the low frequency disturbance rejection properties.
\begin{figure}[h]
	\centering
	\includegraphics[width=0.60\linewidth]{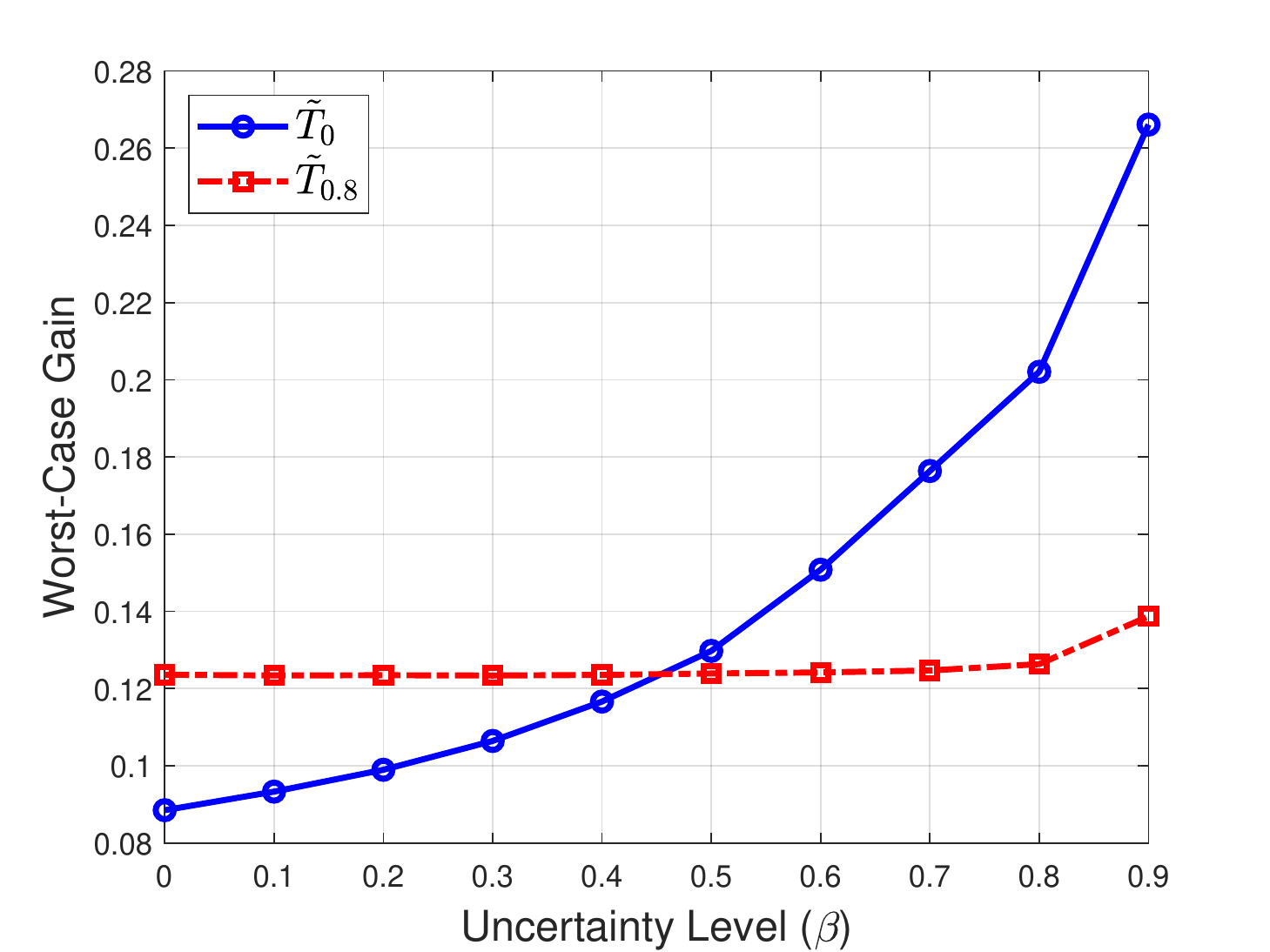}
	\caption{Worst-Case Gain Comparison}
	\label{fig:CompareBothControllers}
\end{figure}
\begin{table}[h]
	\centering
	\begin{tabular}{ |c|c|c|c| }	
		\hline
		\commentb{$\tilde{T}_{0}$} / \commentr{$\tilde{T}_{0.8}$} & \textbf{Disturbance $\tilde{d}_{in}$}
		& \textbf{Measurement noise $\tilde{n}$}
		& \textbf{Generalized disturbance $\tilde{d}$} \\
		\hline
		\textbf{Euclidean output $\tilde{e}_E(T)$} & \commentb{0.066} / \commentr{0.077} & \commentb{0.050} / \commentr{0.047} &  \commentb{0.082} /  \commentr{0.081}\\ \hline
		\textbf{Control effort $\tilde{u}$} & \commentb{0.085} / \commentr{0.087} & \commentb{0.088} / \commentr{0.124}	&  \commentb{0.089} / \commentr{0.124} \\ \hline
		\textbf{Generalized error $\tilde{e}$} & \commentb{0.086} / \commentr{0.097} 
		& \commentb{0.088} / \commentr{0.124} 
		& \commentb{0.089} / \commentr{0.124}\\ 
		\hline
	\end{tabular}
	\caption{Induced Gain Upper Bounds for Different Input-Output Pairs (Nominal Analysis, $\beta = 0$)}
	\label{table:1}
\end{table}
\begin{table}[h]
	\centering
	\begin{tabular}{ |c|c|c|c| }	
		\hline
		\commentb{$\tilde{T}_{0}$} / \commentr{$\tilde{T}_{0.8}$}
		& \textbf{Disturbance $\tilde{d}_{in}$}
		& \textbf{Measurement noise $\tilde{n}$}
		& \textbf{Generalized disturbance $\tilde{d}$} \\
		\hline
		\textbf{Euclidean output $\tilde{e}_E(T)$} & \commentb{0.102} / \commentr{0.093} & \commentb{0.068} / \commentr{0.061} & \commentb{0.120} /  \commentr{0.120}\\ \hline
		\textbf{Control effort $\tilde{u}$} & \commentb{0.190} / \commentr{0.118} & \commentb{0.098} / \commentr{0.124}	&  \commentb{0.202} / \commentr{0.126} \\ \hline
		\textbf{Generalized error $\tilde{e}$}
		& \commentb{0.190} / \commentr{0.118} & \commentb{0.106} / \commentr{0.125} & \commentb{0.202} / \commentr{0.126}\\ 
		\hline
	\end{tabular}
	\caption{Worst-Case Gain Upper Bounds for Different Input-Output Pairs (Robust Analysis, $\beta = 0.8$)}
	\label{table:2}
\end{table}

As noted earlier, the primary design goal was to tightly bound the states at the final
time $T = 5$ seconds. To study this further consider the impact of
link joint disturbance $d_{in}$ on the Euclidean output $e_E$. Let $G$ denote the
unweighted plant which has the same inputs/outputs as the weighted
plant $\tilde{G}$ but with all weights set to identity. Further, let
the respective uncertain interconnection using $G$ be denoted as
$T_{0}:= \mathcal{F}_u(\mathcal{F}_l(G,K_0),\Delta)$ and
$T_{0.8}:= \mathcal{F}_u(\mathcal{F}_l(G,K_{0.8}),\Delta)$. 
Nominal analysis performed for both the $T_{0\,(d_{in}\rightarrow e_E)}$ and $T_{0.8\,(d_{in}\rightarrow e_E)}$ interconnections gives both upper and lower bounds on the nominal performance. The upper bounds are obtained as $0.656$ and $0.766$ respectively, which are shown as blue and red disk in \figref{fig:L2toESyn_NomAnalysis} at the final time. The corresponding lower bounds are obtained as $0.648$ and $0.763$. The worst-case disturbance $\|d_{in}\|_{2,[0,T]} \leq 0.5$ for both interconnections are computed by solving the two point boundary value problem as presented in \cite{iannelli2019construction}. These specific bad disturbances (\figref{fig:wcdist_nom}) pushes the state trajectory (dashed line) as far as the computed lower bound in the LTV simulation. 
\begin{figure}[h]
	\centering
	\begin{minipage}{.5\textwidth}
		\centering
		\includegraphics[width=\linewidth]{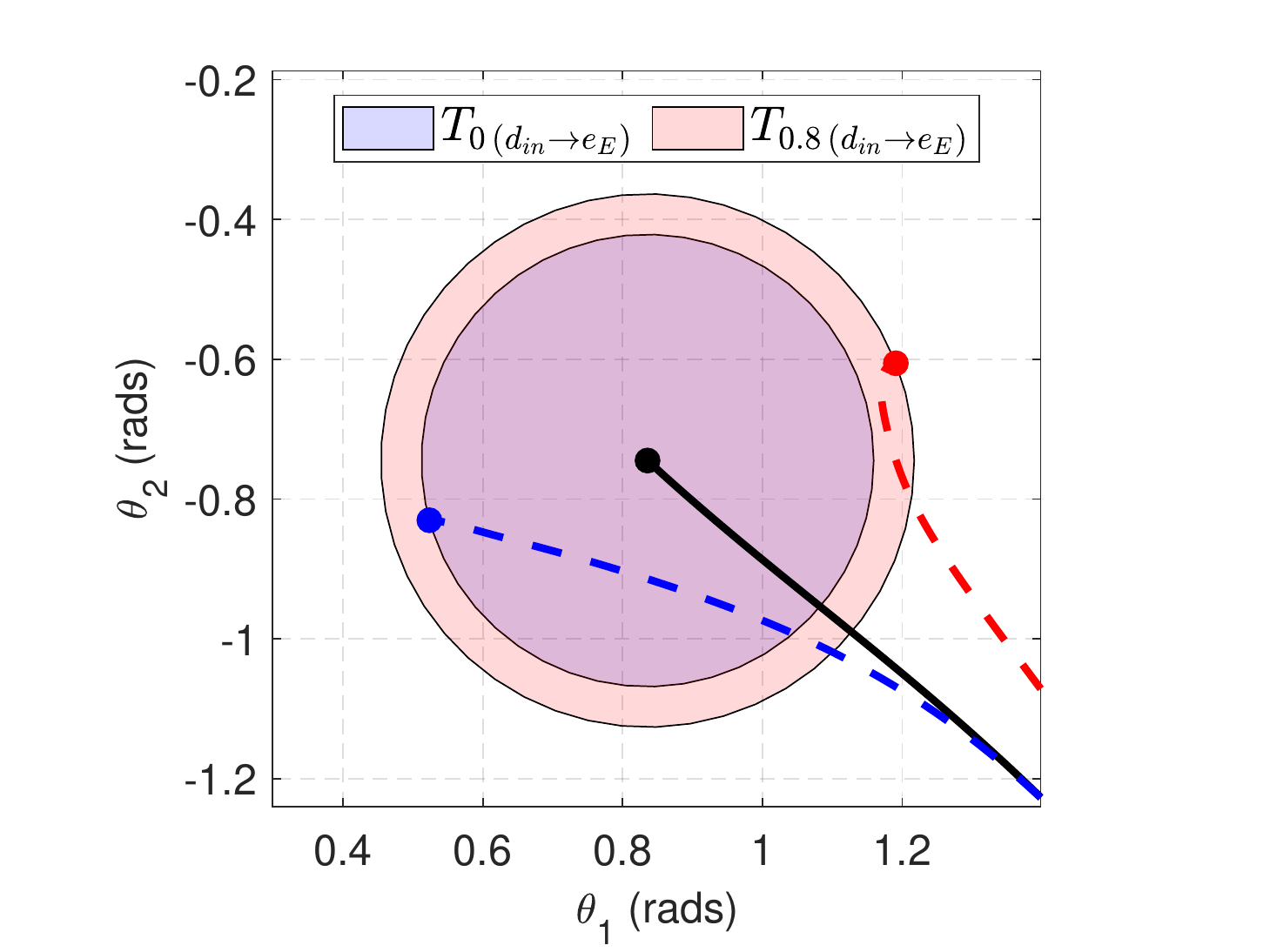}
		\caption{Nominal Analysis $(\beta = 0)$}
		\label{fig:L2toESyn_NomAnalysis}
	\end{minipage}%
	\begin{minipage}{.5\textwidth}
		\centering
		\includegraphics[width=\linewidth]{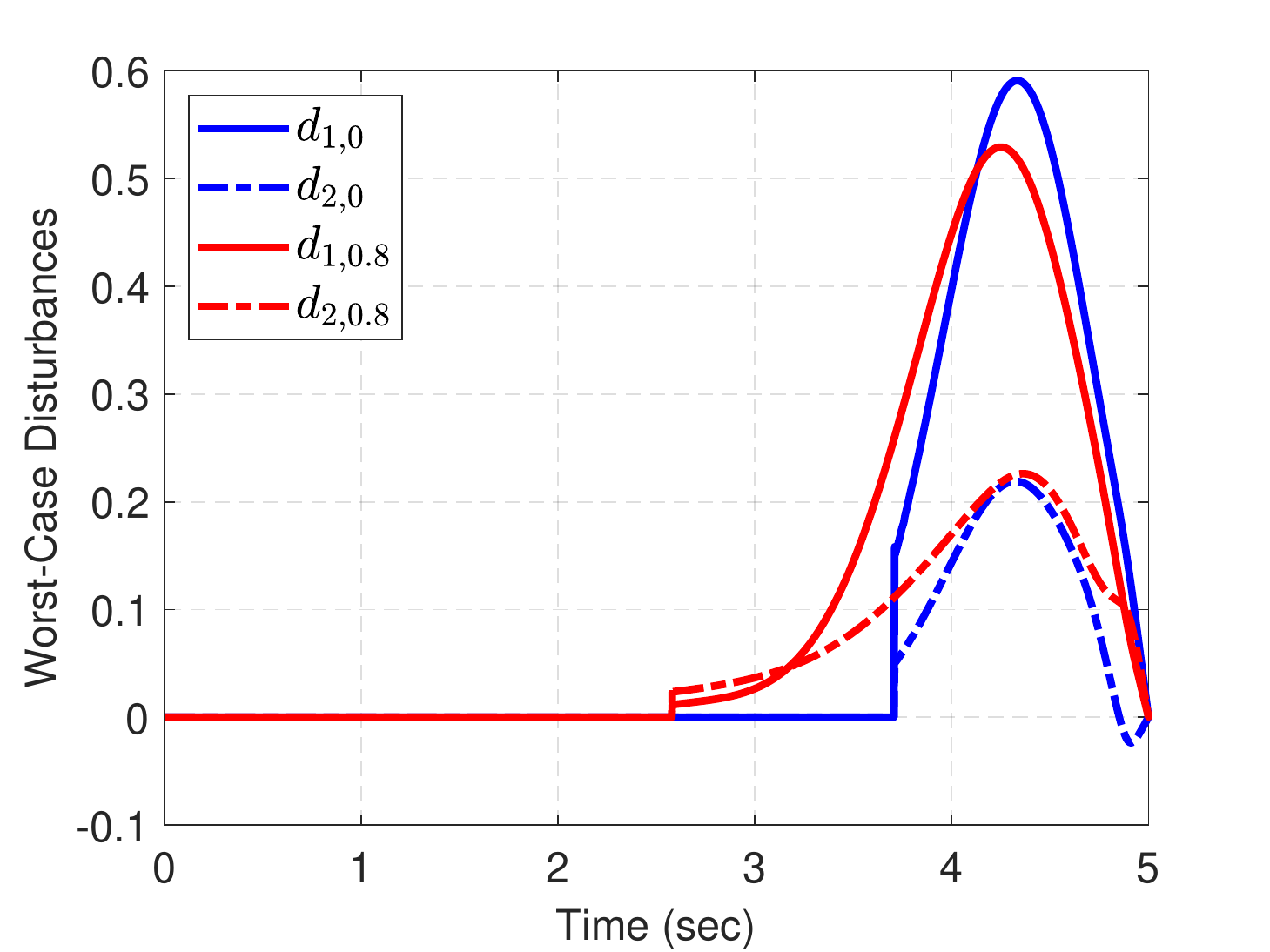}
		\caption{Worst-Case Disturbances}
		\label{fig:wcdist_nom}		
	\end{minipage}
\end{figure}
\begin{figure}[h]
	\centering
	\begin{minipage}{.5\textwidth}
		\centering
		\includegraphics[width=\linewidth]{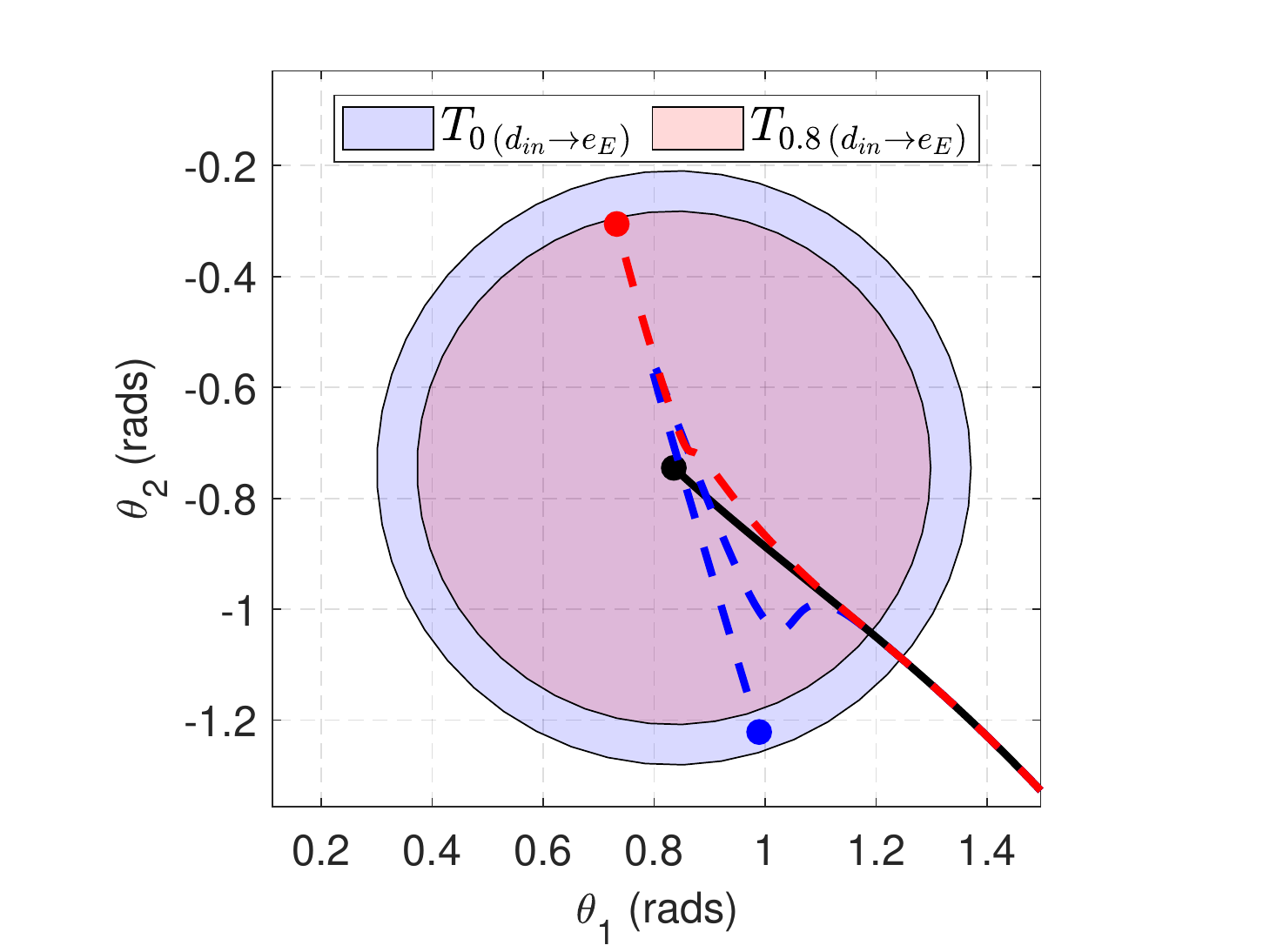}
		\caption{Robust Analysis $(\beta = 0.8)$}
		\label{fig:L2toESyn_RobAnalysis}
	\end{minipage}%
	\begin{minipage}{.5\textwidth}
		\centering
		\includegraphics[width=\linewidth]{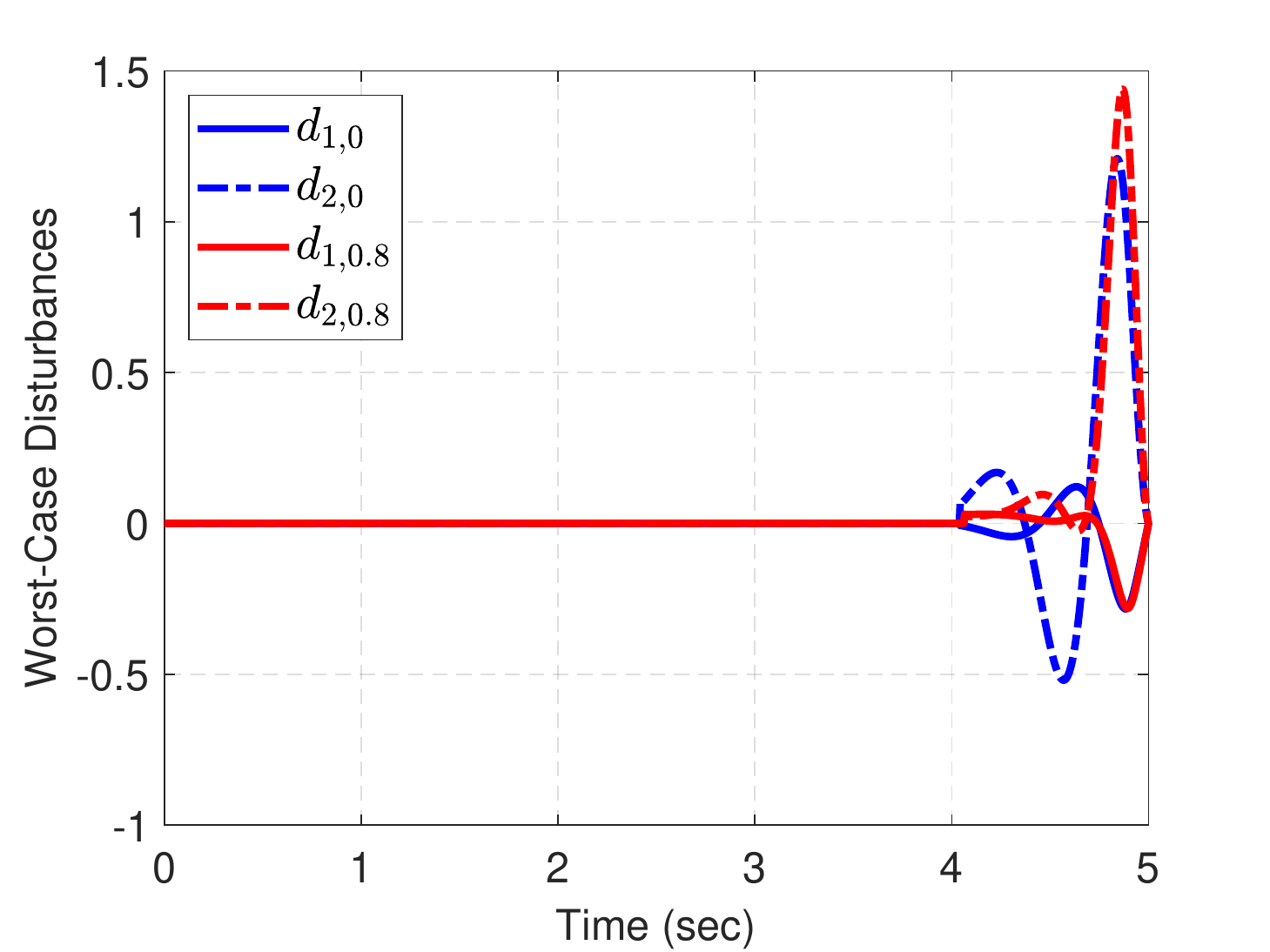}
		\caption{Worst-Case Disturbances}
		\label{fig:wcdist_rob}
	\end{minipage}
\end{figure}

A worst-case terminal Euclidean norm bound is computed for both the interconnections at the
uncertainty level $\beta=0.8$. The corresponding upper bound using
the algorithm in \cite{seiler2019finite} was obtained as $1.02$ and
$0.93$, respectively. This shows approximately a $8.82\%$ reduction
in Euclidean norm bound. As a graphical illustration, these bounds are depicted in \figref{fig:L2toESyn_RobAnalysis} as a disk at the final time $T=5$ seconds. The bound accounts for all the disturbances $d_{in}$ that satisfy $\|d_{in}\|_{2,[0,T]} \leq 0.5$ and all the LTI uncertainties $\Delta$ with norm bound $\beta = 0.8$.

To obtain a reasonable lower bound on the worst-case gain, first $100$ uncertainties are sampled randomly as first order LTI systems with at most size $0.8$. Then, uncertainty block $\Delta$ was replaced with each of sampled uncertainties and nominal LTV analysis was performed from $d_{in}$ to $e_E$ for both the interconnections. Worst-case uncertainties are then obtained after maximizing performance over the sample space. Let, the specific bad perturbation that yields to the poor performance for both $T_{0\,(d_{in}\rightarrow e_E)}$ and $T_{0.8\,(d_{in}\rightarrow e_E)}$ be denoted as $\Delta_{wc_1}$ and $\Delta_{wc_2}$ respectively.
\begin{align*}
	\Delta_{wc_1} = \frac{-0.8\,s + 12.18}{s + 15.23}, \hspace{0.2in} \Delta_{wc_2} = \frac{-0.8\,s + 25.89}{s+32.36}
\end{align*}

The worst-case gain lower bound corresponding to these perturbations
are obtained as $1.001$ and $0.903$ respectively. It is evident that a
combination of worst-case disturbance (scaled to have size $0.5$) and uncertainty (of size $0.8$) pushes the
states of the closed loop system (dashed line) as far as the lower bound of the
worst-case gain. Overall, these simple comparison results show a typical robustness and performance 
trade-off. The nominal controller performs best at no uncertainty whereas the robust controller 
performs better at modeled uncertainty level.

\section{Conclusions \& Future Work}
\label{sec:Conclusions}
This paper proposed an iterative algorithm to design an output-feedback controller that bounds the worst-case gain of an uncertain LTV system on a finite horizon. Similar design can also be done in a state-feedback formulation. The performance was specified using both an induced $\mathcal{L}_2$ and terminal Euclidean norm penalty on the output. Time-domain dynamic IQCs were used to describe the input-output behavior of the uncertainty. The effectiveness of proposed approach was demonstrated using the two-link robot arm example.

This paper opens up few new directions for further research. First, we used a block diagonal assumption for the IQC filter and decision variables (Assumption~\ref{asm:blkdiag}). Future work will consider relaxing this assumption to include full block IQC multipliers in the robust synthesis. Second, note that the proposed method also allows time-varying uncertainty and performance weights. This is useful for many applications such as in many launch scenarios, where the uncertainty or performance requirements are not evenly spread out across the time horizon. Future work in this area is required to exploit the full potential of this method. Moreover, we recognize that proposed method is computationally expensive. Future research is needed to speed up the numerical computations for such finite horizon analysis and synthesis.

\section*{Acknowledgments}                            
This work was gratefully supported by the US ONR grant N00014-18-1-2209. We would like to thank our collaborators Prof. Andrew Packard, Prof. Murat Arcak, and graduate students Kate Schweidel, Emmanuel Sin, Alex Devonport and Galaxy Yin for valuable discussions. We also thank Dr. Douglas Philbrick from U.S. Naval Air Warfare Center Weapons Division, China Lake for helpful comments. Finally, we thank Prof. Harald Pfifer for pointing out the technical issue with Lemma 2 in earlier draft of this paper.

\appendix
\section{Generic Quadratic Cost}
\label{sec:quadcost}

This paper considers an induced norm $\|H\|_{[0,T]}$ (defined by Equation~\ref{eq:norm}) as a performance metric whereas a generic quadratic cost is considered in~\cite{seiler2019finite}. This appendix describes the equivalence between these two formulations. First, consider the finite-horizon linear quadratic optimal control problem as follows:
\begin{align}
	J^*(\gamma):= &\sup_{0 \ne d \in \mathcal{L}_2[0,T]} x(T)^\top F x(T) + \int_0^T \bmtx x(t) \\ d(t) \emtx^\top \bmtx Q(t) & S(t)\\ S(t)^\top & R(t,\gamma) \emtx \bmtx x(t) \\ d(t) \emtx dt\nonumber\\
	&\hspace{0.25in}\mbox{s.t.  } \dot{x}(t) = A(t)\, x(t) + B(t)\, d(t) \mbox{ and } x(0) = 0.
	\label{eq:LQoptprob}
\end{align}
where $Q : [0,T] \rightarrow \Sm^{n_x}$, $S: [0,T] \rightarrow \R^{n_x \times n_d}$, $R : [0,T] \oplus \R^+ \rightarrow \Sm^{n_d}$ and $F \in \R^{n_x\times n_x}$. We assume $Q(t) \succeq 0$, for all $t\in[0,T]$ and $F\succeq 0$. Moreover, we assume a form $R(t,\gamma) = R_0(t) - \gamma^2 I_{n_d}$, where $\gamma > 0$, $R_0(t) \succeq 0$ and $R(t,\gamma) \prec 0$, for all $t\in[0,T]$. There are two directions to the equivalence. First, assume a system $H$ is given as defined by Equation~\eqref{eq:nLTV} and \eqref{eq:LTVoutput}. Note that the induced norm $\|H\|_{[0,T]}$ is defined by the state-space matrices $(A,B,C_I,C_E,D_I)$. Define $(Q,S,R,F)$ as in Theorem~\ref{thm:nominalperf}. Then for any $\gamma>0$, $\|H\|_{[0,T]} < \gamma$ if and only if $J^*(\gamma)<0$. This is shown in Section $2$ of~\cite{seiler2019finite}. Conversely, assume the generic quadratic cost defined by Equation~\eqref{eq:LQoptprob} is given with cost matrices $(Q,S,R,F)$ satisfying the assumptions above. If we further assume that $Q(t)-S(t) R_0(t)^{-1} S(t)^\top \succ 0$ then we can perform the following factorization:
\begin{align}
	\bmtx Q(t) & S(t)\\ S(t)^\top & R_0(t) \emtx = \bmtx C_I(t)^\top \\ D_I(t)^\top \emtx \bmtx C_I(t) & D_I(t) \emtx
\end{align} 
In addition, define $C_E := F^{\frac{1}{2}}$. Then the generic quadratic cost is re-written as:
\begin{align}
	J^*(\gamma) = &\sup_{0\ne d \in \mathcal{L}_2[0,T]} e_E(T)^\top e_E(T) + \int_0^T e_I(t)^\top e_I(t) \,dt - \gamma^2 \int_0^T d(t)^\top \, d(t) \, dt \nonumber\\
	& \hspace{0.25in} \mbox{s.t. Equation}~\eqref{eq:nLTV},~\eqref{eq:LTVoutput} \mbox{ and } x(0) = 0. 
\end{align}
This cost satisfies $J^*(\gamma)<0$ if and only if $\|H\|_{[0,T]} < \gamma$.

\section{Finite Horizon Factorization}
\label{sec:spectralfact}

For infinite horizon LTI systems, spectral factorization results are found in standard robust control textbooks \cite{zhou1996robust,dullerud2013course,francis1987course}. The following lemma provides a time-varying finite horizon generalization of this result.
\begin{lem}
	\label{lem:spectralfact}
	Consider an LTV system $\Psi:\mathcal{L}_{2}^{n_d}[0,T] \rightarrow \mathcal{L}_{2}^{n_e}[0,T]$ be given with state-space realization as follows:
	\begin{align}
		\label{eq:sLTV}
		\begin{split} 
			\dot{x}(t) &= A(t)\, x(t) + B(t)\, d(t)\\
			e(t) & = C(t) \, x(t) + D(t) \, d(t)
		\end{split}
	\end{align}
	with $x\in\R^{n_x}$, $e\in\R^{n_e}$, $d\in\R^{n_d}$ and $D(t)$ is full column rank $\forall t\in[0,T]$. Let $M:[0,T]\rightarrow\Sm^{n_e}$ be a given piecewise continuous matrix valued function with $M(t) \succ 0, \forall t\in[0,T]$. Let $Q : [0,T] \rightarrow \Sm^{n_x}$,
	$S : [0,T] \rightarrow \R^{n_x \times n_d}$,
	$R : [0,T] \rightarrow \Sm^{n_d}$ be defined as follows.
	\begin{align}
		Q := C^\top M C, \hspace{0.2in} S := C^\top M D, \hspace{0.2in} R := D^\top M D
	\end{align}
	with $R(t)\succ 0, \forall t\in[0,T]$. The following statements hold.
	\begin{enumerate}
		\item There exist a differentiable function $X:[0,T]\rightarrow\Sm^{n}$ such that $X(T) = 0$ and 
		\begin{align}			
			\dot{X} + A^\top X + XA + Q - (XB + S)R^{-1}(XB+ S)^\top= 0
			\label{eq:spectralfactDRE}
		\end{align}	
		\item $\Phi:=\Psi^\sim M \Psi$ has a finite horizon factorization $\Phi = U^\sim U$ where $U$ is square invertible LTV system defined on $[0,T]$ with the following state-space realization:
		\begin{align}
			U = \left[
			\begin{array}{c|c c}
				A \, & B\\
				\hline
				W^{-T}(B^\top X+S^\top)& W
			\end{array}
			\right]
		\end{align}
		where $R(t) = W(t)^\top W(t)$, $\forall t\in[0,T]$.
	\end{enumerate}
\end{lem}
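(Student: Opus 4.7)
My plan is to treat the two claims separately, since Part 1 is a self-contained existence result for a standard LQ-type Riccati Differential Equation and Part 2 is a direct state-space computation.

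For Part 1, I would observe that the RDE~\eqref{eq:spectralfactDRE} has exactly the form of the optimal cost-to-go equation for a linear quadratic regulator problem with dynamics $\dot{x}=Ax+Bd$, state weight $Q=C^\top M C$, cross weight $S=C^\top M D$, and positive input weight $R=D^\top M D$. Because $M(t)\succ 0$ and $D(t)$ has full column rank on $[0,T]$, $R(t)\succ 0$ everywhere and the associated quadratic running cost is nonnegative (in fact coincides with $e^\top M e\ge 0$). Standard time-varying LQ theory (e.g., Chapter 3 of \cite{green2012linear}) then guarantees that backward integration from $X(T)=0$ yields a unique, differentiable, bounded (in fact positive semidefinite) solution on the entire finite horizon $[0,T]$. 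The hypothesis $R(t)\succ 0$ is what keeps the quadratic completion well defined, so the RDE itself never becomes singular.

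For Part 2, the plan is to verify $\Psi^\sim M \Psi = U^\sim U$ as $\mathcal{L}_2[0,T]$ operators using a storage-function argument. Drive both $\Psi$ and $U$ by a common input $d\in\mathcal{L}_2[0,T]$ with $x(0)=0$; since $U$ is defined with the same $(A,B)$ as $\Psi$, the two systems share the same state trajectory $x$. Let $V(x,t):=x^\top X(t)x$ with $X$ the solution from Part 1. Differentiating and substituting the RDE yields
\begin{align*}
\dot{V} = x^\top\bigl(-Q+(XB+S)R^{-1}(XB+S)^\top\bigr)x + 2x^\top XB\,d.
\end{align*}
A parallel expansion of $e^\top M e - \eta^\top\eta$, using $e=Cx+Dd$ and $\eta=W^{-\top}(B^\top X+S^\top)x+Wd$ together with $W^\top W=R$, shows that the $d^\top R d$ terms cancel, the cross terms combine to $-2x^\top XBd$, and the quadratic-in-$x$ term is exactly the negative of the RDE expression above. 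Thus $e^\top M e - \eta^\top\eta = -\dot{V}$ pointwise. Integrating over $[0,T]$ and using $V(0)=0$ (zero initial condition) together with $V(T)=0$ (since $X(T)=0$) gives $\int_0^T e^\top M e\,dt = \int_0^T \eta^\top \eta\,dt$ for every $d$. Polarizing this identity yields $\langle d_1, \Psi^\sim M\Psi d_2\rangle = \langle d_1, U^\sim U d_2\rangle$ for all $d_1,d_2\in\mathcal{L}_2[0,T]$, which is the desired operator equality. Square invertibility of $U$ is immediate: $n_u=n_d$ by construction and the feedthrough $W$ satisfies $W^\top W=R\succ 0$, hence $W(t)$ is invertible for all $t$, so $U^{-1}$ has a well-defined LTV state-space realization.

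The main obstacle is the exact bookkeeping in the quadratic-form identity $e^\top M e - \eta^\top\eta = -\dot{V}$; every term must match, and the key algebraic step is recognizing that the choice $C_U=W^{-\top}(B^\top X+S^\top)$ with $W^\top W=R$ produces precisely the cross term $(XB+S)$ and the completion-of-squares term $(XB+S)R^{-1}(XB+S)^\top$ that appears in the RDE. Once this is set up, the argument is essentially a bordered-Hamiltonian manipulation, and the boundary conditions $X(T)=0$ and $x(0)=0$ kill the surface contributions on integration. I do not foresee subtle functional-analytic issues because the horizon is finite and all matrix-valued coefficients are piecewise continuous, so $x\in\mathcal{L}_2[0,T]$ whenever $d\in\mathcal{L}_2[0,T]$.
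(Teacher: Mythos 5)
Your proposal is correct. Part 1 follows essentially the same route as the paper: the paper cites the no-finite-escape results of Molinari and Ku\v{c}era for RDEs whose cost matrix $\bigl[\begin{smallmatrix} Q & S\\ S^\top & R\end{smallmatrix}\bigr]=\bigl[\begin{smallmatrix} C^\top\\ D^\top\end{smallmatrix}\bigr]M\bigl[\begin{smallmatrix} C & D\end{smallmatrix}\bigr]\succeq 0$ and $R\succ 0$, which is precisely your nonnegative-cost LQ argument (boundedness of the value function between $0$ and the cost of $d\equiv 0$ on a finite horizon rules out escape). For Part 2 you take a genuinely different, input--output route: you establish the pointwise completion-of-squares identity $e^\top M e-\eta^\top\eta=-\tfrac{d}{dt}\bigl(x^\top X x\bigr)$ (your bookkeeping checks out: the $d^\top R d$ terms cancel, the cross terms reduce to $-2x^\top XBd$, and the $x$-quadratic term is the RDE residual), integrate over $[0,T]$ using $x(0)=0$ and $X(T)=0$ to kill the boundary terms, and then polarize; since $\Psi^\sim M\Psi$ and $U^\sim U$ are both self-adjoint bounded operators on $\mathcal{L}_2[0,T]$, equality of their quadratic forms does imply operator equality, so the argument is complete (square invertibility of $U$ from $W(t)$ nonsingular is handled as you say). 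The paper instead verifies the factorization at the realization level: the cascade realization of $\Psi^\sim M\Psi$ is mapped to that of $U^\sim U$ by the state similarity transformation $\bigl[\begin{smallmatrix} I & 0\\ X(t) & I\end{smallmatrix}\bigr]$, which is the realization-theoretic counterpart of your storage-function identity. Your version is more self-contained and makes explicit why the terminal condition $X(T)=0$ is essential (it is exactly what annihilates the surface term, a point the paper itself emphasizes in the remark following the proof of Lemma 1), whereas the paper's similarity argument is shorter and exhibits the equality of realizations directly without reference to a particular input.
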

\begin{proof}
	Since $R(t) \succ 0$ and by Schur complement lemma $Q(t) - S(t)
	R(t)^{-1} S(t)^\top \succ 0, \forall t\in[0,T]$, the RDE does not have a finite escape time and thus always have a bounded unique solution regardless of the boundary condition (Corollary 2.3 of \cite{molinari1975nonnegativity}, Theorem $8$ in \cite{kuvcera1973review}). Further, it can be verified  that the time-varying state-space realization of $\Psi^\sim M \Psi$ is related to that of a system $U^\sim U$ by a similarity transformation matrix $\bsmtx I & 0\\X(t) & I\esmtx$. 
\end{proof}

\section{Proof of Lemma 1}
\label{sec:proof1}

\textbf{Lemma 1. } 	Let $\epsilon > 0$, $\gamma>0$, $M_v(t)\succ 0$, $M_w(t)\succ 0$ and a differentiable function $P:[0,T]\rightarrow\Sm^{n}$ such that $P(T)\geq F$ be given with the choice of $(Q,S,R,F)$ as in Equation~\eqref{eq:QSRFRobPerf}. The following statements are equivalent:
\begin{enumerate}
	\item $DLMI_{Rob}(P,M,\gamma^2,t) \leq -\epsilon I, \;\hspace{0.1in}\; \forall t\in[0,T].$
	\item $\|N_{scl}\|_{[0,T]} \leq 1 - \hat{\epsilon}$,\hspace{0.1in} for some $\hat{\epsilon}$.
\end{enumerate}
\begin{proof}
	$\mathbf{(1\Rightarrow 2)}$ This proof is presented in two parts. First, we show that $DLMI_{Rob}(P,M,\gamma^2,t) \leq -\epsilon I, \; \forall t\in[0,T]$ can equivalently be written as a dissipation inequality with only single valid IQC. Second, the state-space realization of the extended system $N_{ext}$ and the scaled system $N_{scl}$ are indeed the same, which allow us to rewrite the robust performance DLMI as a nominal performance DLMI for $N_{scl}$. Integrating the related dissipation inequality completes the proof.\\ 
	
	\noindent \textbf{Part (i):} Define a storage function $V(x,t):=x^\top P(t)x$. Left and right multiply the DLMI \eqref{eq:DLMI} by $[x^\top, w^\top, d^\top]$ and its transpose to show that $V$ satisfies the following dissipation inequality for all $t\in [0,T]$:
	\begin{align}
		\label{eq:DI2}
		\dot{V} + 
		\bmtx x \\ \bsmtx w \\ d \esmtx \emtx^\top
		\bmtx Q & S \\ S^\top & R \emtx 
		\bmtx x \\ \bsmtx w \\ d \esmtx \emtx
		+ z^\top M z
		\le -\epsilon \, d^\top d.
	\end{align}
	where $x = \bsmtx x_N \\ x_\psi\esmtx \in \R^n$ is the state of extended system as shown in \figref{fig:IQCAugment}. Consider the outputs of the IQC filter $\Psi= \bsmtx \Psi_v & 0 \\ 0 & \Psi_w\esmtx$ be partitioned as $z~:=~\bsmtx z_v\\z_w\esmtx$.  Let $\Psi_v$ have the following state-space representation with state $x_v$, input $v$, and output $z_v$:	
	\begin{align}
		\begin{split}
			\dot{x}_v(t) = A_1(t)\, x_v(t) + B_1(t)\, v(t)\\
			z_v(t) = C_1(t) \, x_v(t) + D_1(t)\, v(t)
		\end{split}	
	\end{align}		
	A similar time-varying state-space expression also holds for $\Psi_w$ with matrices $(A_2,B_2,C_2,D_2)$, state $x_w$, input $w$, and output $z_w$. Thus the term $z^\top M z$ in \eqref{eq:DI2} can be expressed as:
	\begin{align}
		\label{eq:zTMz}
		\begin{split}
			z^\top M z &= z_v^\top M_v z_v - z_w^\top M_wz_w\\
			&= \bmtx x_v\\v\emtx^\top \bmtx C_1^\top \\D_1^\top \emtx  M_v \bmtx C_1 & D_1 \emtx \bmtx x_v\\v\emtx
			- \bmtx x_w\\w\emtx^\top \bmtx C_2^\top \\D_2^\top \emtx M_w \bmtx C_2 & D_2 \emtx \bmtx x_w\\w\emtx
		\end{split}
	\end{align}
	First, consider only the terms involving $v$ and define the quadratic storage matrices as:
	\begin{align}
		\label{eq:QuadCost}
		\bmtx Q_v & S_v\\ S_v^\top & R_v \emtx &:= \bmtx C_1^\top \\ D_1^\top \emtx  M_v \bmtx C_1 & D_1 \emtx
	\end{align}	
	By Lemma~\ref{lem:spectralfact} in Appendix~\ref{sec:spectralfact}, the condition $M_v(t)\succ 0$, $\forall t \in [0,T]$ implies that there exists $X_v:[0,T]\rightarrow\Sm^{n_{x_v}}$ such that:
	\begin{align}
		\label{eq:DRE}
		\begin{split}
			\dot{X}_v + A_1^\top X_v + X_vA_1 + Q_v - (X_vB_1 + S_v)R_v^{-1}(X_vB_1 + S_v)^\top = 0, \hspace{0.2in} X_v(T) = 0
		\end{split}	
	\end{align}
	Moreover, Lemma~\ref{lem:spectralfact} in Appendix~\ref{sec:spectralfact} also implies that there exists a spectral factor $U_v$ with a state-space realization as $(A_1,B_1,\tilde{C}_1,\tilde{D}_1)$ with $\tilde{C}_1 := W_v^{-T}(B_1^\top X_v+S_v^\top)$,  $\tilde{D}_1 := W_v$ and $R_v = W_v^\top W_v$. Note that $x_v$ is the state and $\tilde{v}$ is the output of the spectral factor $U_v$. The RDE \eqref{eq:DRE} can be written in terms of the state matrices of $U_v$ as: 
	\begin{align}
		Q_v = -\dot{X}_v -A_1^\top X_v - X_v A_1 +  \tilde{C}_1^\top \tilde{C}_1
	\end{align}
	Substitute above $Q_v$ and $S_v^\top = \tilde{D}_1^\top\tilde{C}_1 - B_1^\top X_v$ in $\eqref{eq:QuadCost}$ to obtain the following expression:	
	\begin{align}
		z_v^\top M_vz_v = & -x_v\dot{X}_v x_v -(A_1 x_v + B_1 v)^\top X_v x_v - x_v^\top X_v (A_1 x_v + B_1 v) + (\tilde{C}_1 x_v + \tilde{D}_1 v)^\top (\tilde{C}_1 x_v + \tilde{D}_1 v)
	\end{align}
	This can be simplified to the following expression:	
	\begin{align}	
		z_v^\top M_vz_v = -x_v^\top \dot{X}_v x_v - \dot{x}_v^\top X_v x_v -x_v^\top X_v \dot{x}_v +  \tilde{v}^\top\tilde{v}= -\frac{d}{dt} \left(x_v^\top X_v x_v\right) +  \tilde{v}^\top\tilde{v}
		\label{eq:vPiv}
	\end{align}
	Similarly, with $M_w(t) \succ 0$, $\forall t \in [0,T]$ the spectral factor $U_w$ can be obtained with a state-space realization $(A_2,B_2,\tilde{C}_2,\tilde{D}_2)$, states $x_w$, and outputs $\tilde{w}$. The following expression holds:
	\begin{align}
		\label{eq:wPiw}
		z_w^\top M_wz_w = -\frac{d}{dt} \left(x_w^\top X_w x_w\right) + \tilde{w}^\top\tilde{w}
	\end{align}
	where $X_w:[0,T]\rightarrow\Sm^{n_{x_w}}$ is a solution to a related RDE with respective quadratic storage matrices and the boundary condition $X_w(T) = 0$. Subtract Equation~\eqref{eq:wPiw} from \eqref{eq:vPiv} to get the left hand side of the Equation~\eqref{eq:zTMz} as follows:
	\begin{align}
		\label{eq:zTMzPi}
		\begin{split}
			z^\top Mz &= -\frac{d}{dt} \left(x_\psi^\top X x_\psi \right) + \tilde{v}^\top\tilde{v} - \tilde{w}^\top\tilde{w}
		\end{split}
	\end{align}		
	where $x_\psi = \bsmtx x_v\\ x_w\esmtx$, $X(t) := \bsmtx X_v(t) & 0 \\ 0 & -X_w(t) \esmtx$ and $X(T) = 0$. Let the modified matrix $\tilde{P}(t)$ be defined as follows:
	\begin{align}
		\label{eq:Ptilde}
		\tilde{P}(t) := P(t) - \bmtx 0 & 0\\0 & X(t)\emtx
	\end{align}
	This yields a modified storage function $\tilde{V}(x,t):= x^\top \tilde{P}(t)x$. The modified storage function has the form:
	\begin{align}
		\label{eq:newstorage}
		\tilde{V}(x,t) = V(x,t) - x_v^\top X_vx_v + x_w^\top X_wx_w
	\end{align}
	where the second and third term can be interpreted as hidden energy stored in the IQC multiplier. With modified storage function $\tilde{V}$ the dissipation inequality \eqref{eq:DI2} can be recast as,
	\begin{align}
		\label{eq:DI6}
		\begin{split}
			\dot{\tilde{V}} + \bmtx x \\ \bsmtx w \\ d \esmtx \emtx^\top
			\bmtx Q & S \\ S^\top & R \emtx 
			\bmtx x \\ \bsmtx w \\ d \esmtx \emtx + \bmtx \tilde{v} \\ \tilde{w}\emtx^\top J_{n_v,n_w} \bmtx \tilde{v} \\ \tilde{w}\emtx
			\le -\epsilon \, d^\top d
		\end{split}
	\end{align}
	This dissipation inequality is equivalent to \eqref{eq:DI2} with a single IQC $\mathcal{I}(U,J_{n_v,n_w})$ where
	$U:= \bsmtx U_v & 0 \\0 & U_w \esmtx$. Next, We show that $\mathcal{I}(U,J_{n_v,n_w})$ is a valid time-domain IQC. To see this, define $V_{\psi}(x_\psi(t),t):= x_\psi(t)^\top X(t) x_\psi(t)$, $\tilde{z}(t) := \bsmtx \tilde{v}(t)\\ \tilde{w}(t)\esmtx$ and integrate Equation~\eqref{eq:zTMzPi} both sides from $0$ to $T$ to obtain:
	\begin{align}
		\int_0^T z^\top Mz \; dt &= - V_{\psi}(x_\psi(T),T)  + V_{\psi}(x_\psi(0),0) + \int_0^T \tilde{z}^\top J_{n_v,n_w} \tilde{z} \; dt
	\end{align}
	Note that $V_{\psi}(x_\psi(0),0) = 0$ because $x_\psi(0)=0$ and $V_{\psi}(x_\psi(T),T)= x_\psi(T)^\top X(T) x_\psi(T) = 0$ due to the boundary condition $X(T)=0_{n_\psi}$ of the time-varying factorization RDE. Thus if $\int_0^T z^\top Mz \; dt \geq 0$ then we have $\int_0^T \tilde{z}^\top J_{n_v,n_w} \tilde{z} \; dt \geq 0$. Finally, note that $\tilde{P}(t)$ satisfies the same boundary condition as $P(t)$ i.e. $\tilde{P}(T) \geq F$ because of the boundary condition $X(T) = 0$. Thus, $\tilde{V}(x,t)$ is a valid storage function $\forall t\in[0,T]$.\\ 
	
	\noindent \textbf{Part (ii):} Let the extended system of $N$ with spectral factor $U$ be written in partitioned form as:
	\begin{align}
		\label{eq:extsys2}
		\bmtx \dot{x}\\\hline \tilde{v}\\ \tilde{w}\\ e_{I} \\ e_E\emtx =
		\left[
		\begin{array}{c|c c}
			\mathcal{A} & \mathcal{B}_w & \mathcal{B}_d\\
			\hline
			\mathcal{C}_{\tilde{v}} & \mathcal{D}_{\tilde{v}w} & \mathcal{D}_{\tilde{v}d}\\
			\mathcal{C}_{\tilde{w}} & \mathcal{D}_{\tilde{w}w} & 0 \\
			\mathcal{C}_{I} & \mathcal{D}_{Iw} & \mathcal{D}_{Id}\\
			\mathcal{C}_E & 0 & 0
		\end{array}
		\right]
		\bmtx x \\\hline w\\ d\emtx
	\end{align}
	where $x = \bsmtx x_N \\ x_v\\x_w\esmtx \in \R^n$ and state-space matrices: 
	\begin{align*}
		&\mathcal{A} := \bmtx A_N & 0 & 0\\  B_{1} C_{v} & A_{1} & 0\\ 0 & 0& A_2\emtx,
		\mathcal{B}_w:= \bmtx B_{w} \\ B_{1} D_{vw}\\B_2 \emtx,\mathcal{B}_d:= \bmtx B_{d} \\ B_{1} D_{vd}\\0 \emtx\\
		&\mathcal{C}_{\tilde{v}} := \bmtx \tilde{D}_{1} C_{v} & \tilde{C}_{1} & 0\emtx,
		\mathcal{C}_I := \bmtx C_{I} & 0 & 0\emtx,
		\mathcal{C}_{\tilde{w}} := \bmtx 0 & 0 &\tilde{C}_2 \emtx\\
		&\mathcal{C}_E := \bmtx C_E & 0 & 0\emtx,
		\mathcal{D}_{\tilde{v}w} := \tilde{D}_1D_{vw}, \mathcal{D}_{\tilde{v}d} := \tilde{D}_1D_{vd}\\
		&\mathcal{D}_{\tilde{w}w}:=\tilde{D}_2,\mathcal{D}_{Iw} := D_{Iw}, \mathcal{D}_{Id} = D_{Id}
	\end{align*}
	Using the choice of $(Q,S,R)$ from Equation~\eqref{eq:QSRFRobPerf}, the following partitioned DLMI is equivalent to the dissipation inequality~\eqref{eq:DI6} for the state-space realization of~\eqref{eq:extsys2}.
	\begin{align}
		\label{eq:DLMITil}
		\bmtx
		\dot{\tilde{P}}+\mathcal{A}^\top\tilde{P}+\tilde{P}\mathcal{A} & \tilde{P}\mathcal{B}_w & \tilde{P}\mathcal{B}_d\\
		\mathcal{B}_w^\top\tilde{P} & 0_{n_w} & 0\\
		\mathcal{B}_d^\top\tilde{P} & 0 & -\gamma^2I_{n_d}
		\emtx +
		\bmtx \mathcal{C}_I^\top\\ \mathcal{D}_{Iw}^\top \\ \mathcal{D}_{Id}^\top\emtx \bmtx \mathcal{C}_I & \mathcal{D}_{Iw} & \mathcal{D}_{Id}\emtx 
		+ \bmtx \mathcal{C}_{\tilde{v}}^\top & \mathcal{C}_{\tilde{w}}^\top\\
		\mathcal{D}_{\tilde{v}w}^\top& \mathcal{D}_{\tilde{w}w}^\top\\
		\mathcal{D}_{\tilde{v}d}& 0\emtx
		J_{n_v,n_w} \bmtx
		\mathcal{C}_{\tilde{v}} & \mathcal{D}_{\tilde{v}w}& \mathcal{D}_{\tilde{v}d}\\
		\mathcal{C}_{\tilde{w}} & \mathcal{D}_{\tilde{w}w}& 0
		\emtx\leq -\epsilon I
	\end{align}
	The condition  $M_w(t) \succ 0, \forall t\in[0,T]$ is sufficient to ensure that $\mathcal{D}_{\tilde{w}w} := \tilde{D}_2$ is nonsingular. The output equation for $w$ can be written as: $w=\mathcal{D}_{\tilde{w}w}^{-1}(\tilde{w} - \mathcal{C}_{\tilde{w}}x)$. Use this relation to substitute for $w$ in Equation~\eqref{eq:extsys2}. This gives the following scaled system:   
	\begin{align}
		\label{eq:ScaledSys}
		\bmtx \dot{x}\\\hline \tilde{v}\\ e_{I} \\ e_E\emtx =
		\left[
		\begin{array}{c|c c}
			\mathcal{A} & \mathcal{B}_w & \mathcal{B}_d\\
			\hline
			\mathcal{C}_{\tilde{v}}& \mathcal{D}_{\tilde{v}w} & \mathcal{D}_{\tilde{v}d} \\
			\mathcal{C}_{I} & \mathcal{D}_{Iw} & \mathcal{D}_{Id}\\
			\mathcal{C}_E & 0 & 0
		\end{array}
		\right] L
		\bmtx x \\\hline \tilde{w}\\ d\emtx
	\end{align}
	where the nonsingular time-varying matrix $L$ is defined as:
	\begin{align}
		L := \bmtx I_n & 0 & 0\\ 
		-\mathcal{D}_{\tilde{w}w}^{-1}\mathcal{C}_{\tilde{w}} & \mathcal{D}_{\tilde{w}w}^{-1}& 0\\
		0 &0 & I_{n_d}\emtx
	\end{align}
	Equation~\eqref{eq:ScaledSys} can be rewritten as follows:
	\begin{align}
		\label{eq:ScaledSys2}
		\bmtx \dot{x}\\\hline \tilde{v}\\ e_{I} \\ e_E\emtx =
		\left[
		\begin{array}{c|c c}
			\tilde{\mathcal{A}} & \mathcal{B}_{\tilde{w}} & \mathcal{B}_d\\
			\hline
			\tilde{\mathcal{C}}_{\tilde{v}} & \mathcal{D}_{\tilde{v}\tilde{w}} & \mathcal{D}_{\tilde{v}d} \\
			\tilde{\mathcal{C}}_{I} & \mathcal{D}_{I\tilde{w}} & \mathcal{D}_{Id}\\
			\mathcal{C}_E & 0 & 0
		\end{array}
		\right] 
		\bmtx x \\\hline \tilde{w}\\ d\emtx
	\end{align}
	where the updated state-space matrices are defined as:
	\begin{align*}
		\begin{split}
			\tilde{\mathcal{A}} &:= \bmtx A_N & 0 & -B_w\tilde{D}_2^{-1}\tilde{C}_2 \\
			B_1 C_v & A_1 & -B_1D_{vw}\tilde{D}_2^{-1}\tilde{C}_2\\
			0 & 0 & A_2-B_2\tilde{D}_2^{-1}\tilde{C}_2\emtx,\, \mathcal{B}_{\tilde{w}} := \bmtx B_{w}\tilde{D}_2^{-1}\\ B_{1} D_{vw}\tilde{D}_2^{-1}\\B_2\tilde{D}_2^{-1} \emtx\\
			\tilde{\mathcal{C}}_{\tilde{v}} &= \bmtx \tilde{D}_1C_v & \tilde{C}_1 & -\tilde{D}_1 D_{vw}\tilde{D}_2^{-1}\tilde{C}_2 \emtx,\,\tilde{\mathcal{C}}_{I} := \bmtx C_I & 0 & -D_{Iw}\tilde{D}_2^{-1}\tilde{C}_2\emtx\\
			\mathcal{D}_{\tilde{v}\tilde{w}} &:= \tilde{D}_1D_{vw}\tilde{D}_2^{-1},\, \mathcal{D}_{I\tilde{w}} := D_{Iw}\tilde{D}_2^{-1}
		\end{split}
	\end{align*}
	Note that the following state-space matrices of the inverse system of $U_w$ shows up in the above representation. 
	\begin{align}
		U_w^{-1} := \left[\begin{array}{c|c}
			A_2-B_2\tilde{D}_2^{-1}\tilde{C}_2 & B_2\tilde{D}_2^{-1}\\
			\hline
			-\tilde{D}_2^{-1}\tilde{C}_2 & \tilde{D}_2^{-1}
		\end{array}\right]
	\end{align}
	Let scaled signal $\tilde{d} := \gamma \, d\,$ and state-space matrices ($\mathcal{A}_{scl}$, $\mathcal{B}_{scl}$, $\mathcal{C}_{scl}$, $\mathcal{D}_{scl}$) be defined as follows:
	\begin{align}
		\label{eq:ssscaled}
		\begin{split}
			\mathcal{A}_{scl} := \tilde{\mathcal{A}}, \hspace{0.2in} 
			\mathcal{B}_{scl} := \bmtx \mathcal{B}_{\tilde{w}} & \gamma^{-1}\mathcal{B}_d\emtx, \hspace{0.2in} 
			\mathcal{C}_{scl} := \bmtx \tilde{\mathcal{C}}_{\tilde{v}}\\ \tilde{\mathcal{C}}_{I} \emtx, \hspace{0.2in} 
			\mathcal{D}_{scl} := \bmtx\tilde{D}_1D_{vw}\tilde{D}_2^{-1} & \gamma^{-1}\tilde{D}_1D_{vd}\\D_{Iw}\tilde{D}_2^{-1} & \gamma^{-1}D_{Id}\emtx
		\end{split}
	\end{align}
	It is readily verified that, with above definition, the scaled plant $N_{scl}$ has a state-space realization as follows:
	\begin{align}
		\label{eq:ScaledSys3}
		\bmtx \dot{x}\\\hline \bmtx\tilde{v}\\ e_{I} \emtx\\ e_E\emtx =
		\left[
		\begin{array}{c|c c}
			\mathcal{A}_{scl} & \mathcal{B}_{scl}\\
			\hline
			\mathcal{C}_{scl} & \mathcal{D}_{scl} \\
			\mathcal{C}_E & 0 
		\end{array}
		\right]
		\bmtx x \\\hline \bmtx \tilde{w}\\ \tilde{d}\emtx \emtx
	\end{align}
	Perform the congruence transformation by multiplying the DLMI \eqref{eq:DLMITil} on the left/right by $L^\top/L$ to get: 
	\begin{align}
		\begin{split}
			\label{eq:DLMICong}
			\bmtx
			\dot{\tilde{P}}+\tilde{\mathcal{A}}^\top\tilde{P}+\tilde{P}\tilde{\mathcal{A}} & \tilde{P}\mathcal{B}_{\tilde{w}} & \tilde{P}\mathcal{B}_d\\
			\mathcal{B}_{\tilde{w}}^\top\tilde{P} & 0_{n_{\tilde{w}}} & 0\\
			\mathcal{B}_d^\top\tilde{P} & 0 & -\gamma^2I_{n_d}
			\emtx +
			\bmtx \tilde{\mathcal{C}}_I^\top\\ \mathcal{D}_{I\tilde{w}}^\top \\ \mathcal{D}_{Id}^\top\emtx \bmtx \tilde{\mathcal{C}}_I & \mathcal{D}_{I\tilde{w}} & \mathcal{D}_{Id}\emtx 
			+ \bmtx \tilde{\mathcal{C}}_{\tilde{v}}^\top & 0\\
			\mathcal{D}_{\tilde{v}\tilde{w}}^\top& I_{n_{\tilde{w}}}\\
			\mathcal{D}_{\tilde{v}d}^\top& 0\emtx
			J \bmtx
			\tilde{\mathcal{C}}_{\tilde{v}} & \mathcal{D}_{\tilde{v}\tilde{w}}& \mathcal{D}_{\tilde{v}d}\\
			0 & I_{n_{\tilde{w}}}& 0
			\emtx\leq -\epsilon I
		\end{split}
	\end{align}	
	This DLMI can also be written in more compact notation using the state matrices of $N_{scl}$. Multiply inequality \eqref{eq:DLMICong} left and right by $[x^\top, \tilde{w}^\top, d^\top]$ and its transpose to show that $\tilde{V}(x(t),t)=x(t)^\top \tilde{P}(t) x(t)$ satisfies the following dissipation inequality:
	\begin{align}
		\label{eq:DI3}
		\dot{\tilde{V}} + e_I^\top e_I - \gamma^2d^\top d + \tilde{v}^\top\tilde{v} 	-\tilde{w}^\top\tilde{w} \le -\epsilon \, d^\top d
	\end{align}
	Define $\tilde{d} := \gamma \, d$, $\tilde{\epsilon} := \epsilon\,\gamma^{-2}$ and combine the inputs $\tilde{w}$, $\tilde{d}$ together to rewrite the inequality \eqref{eq:DI3} as follows:
	\begin{align}
		\label{eq:DI4}
		\dot{\tilde{V}} + \bmtx \tilde{v} \\ e_I \emtx^\top\bmtx \tilde{v} \\ e_I \emtx - \bmtx \tilde{w} \\ \tilde{d} \emtx^\top\bmtx \tilde{w} \\ \tilde{d} \emtx \le -\tilde{\epsilon}\, \tilde{d}^\top\tilde{d}
	\end{align}
	Integrate over $[0,T]$ to obtain the following dissipation inequality:
	\begin{align}
		\begin{split}
			\label{eq:finalstep}
			\tilde{V}(x(T),T) - \tilde{V}(x(0),0) + \|\bsmtx \tilde{v} \\ e_I\esmtx\|^2_{2,[0,T]} - \|\bsmtx \tilde{w} \\ \tilde{d}\esmtx\|^2_{2,[0,T]}
			\leq -\tilde{\epsilon} \|\tilde{d}\|^2_{2,[0,T]}
		\end{split}
	\end{align}
	Note that $\tilde{V}(x(0),0)=0$ as $x(0) = 0$ and $\tilde{V}(x(T),T) = x(T)^\top \tilde{P}(T)x(T)$ with boundary condition $\tilde{P}(T) \geq F$ as shown earlier. Apply this boundary condition in inequality \eqref{eq:finalstep} with the definition of $F$ from Equation~\eqref{eq:QSRFRobPerf} to conclude:
	\begin{align}
		\|e_E(T)\|_{2}^2  + \|\bsmtx \tilde{v} \\ e_I\esmtx\|^2_{2,[0,T]} \leq \|\bsmtx \tilde{w} \\ \tilde{d}\esmtx\|^2_{2,[0,T]} -\tilde{\epsilon} \|\tilde{d}\|^2_{2,[0,T]}
	\end{align}
	Divide both sides by $\|\bsmtx \tilde{w} \\ \tilde{d}\esmtx\|^2_{2,[0,T]} < \infty$ and define $\hat{\epsilon}:=\tilde{\epsilon} \|\tilde{d}\|^2_{2,[0,T]}/\|\bsmtx \tilde{w} \\ \tilde{d}\esmtx\|^2_{2,[0,T]}$ to show that $\|N_{scl}\|_{[0,T]} \leq 1 - \hat{\epsilon}$. This proof can be worked backwards to prove $\mathbf{(2 \Rightarrow 1)}$.
\end{proof}

\begin{eremark}
	If the IQC decision variables $M$ and the state-space matrices of $\Psi$ are constant on the given time horizon, then for sufficiently large horizon $T$, the RDE solution for the finite horizon factorization converges to that of the steady state Algebric Riccati Equation (ARE). As a result, the state-space realization of the finite horizon factorization converges to that of an infinite horizon LTI spectral factorization. However, the ARE solution $X$ for infinite horizon spectral factorization is sign indefinite, thus it fails to satisfy the terminal boundary condition on $\tilde{P}$. Thus, it is important to note in the above proof that in order to satisfy the boundary condition on storage function, one must use finite horizon factorization.
\end{eremark}

\nocite{*}
\bibliographystyle{ieeetr} 
\bibliography{References}
\end{document}